\newtheorem*{rep@theorem}{\rep@title}
\newcommand{\newreptheorem}[2]{
\newenvironment{rep#1}[1]{
 \def\rep@title{#2 \ref{##1}}
 \begin{rep@theorem}\itshape}
 {\end{rep@theorem}}}
\theoremstyle{plain}
\newcommand{\ignore}[1]{}
\def\colorful{1}
\newtheorem*{theorem*}{Theorem}
\newtheorem*{noclaim*}{Claim}
\newcommand{\ssum}{\mathrm{sum}}
\renewcommand{\N}{\mathds{N}}
\def\XX{\mathbf{X}}
\def\rr{\mathbf{r}}
\begin{document}

\title{Efficient average-case population recovery in the presence of insertions and deletions}

{
\author{
Frank Ban\\
UC Berkeley\\
fban@berkeley.edu
\and
Xi Chen\\
Columbia University\\
xichen@cs.columbia.edu
\and
Rocco A. Servedio \\
Columbia University\\
rocco@cs.columbia.edu
\and
Sandip Sinha \\
Columbia University\\
sandip@cs.columbia.edu
}
}

\maketitle

\thispagestyle{empty}


\begin{abstract}

A number of recent works have considered the \emph{trace reconstruction problem}, in which an unknown source string $x \in \zo^n$ is transmitted through a probabilistic channel which may randomly delete coordinates or insert random bits, resulting in a \emph{trace} of $x$.  The goal is to reconstruct the original string~$x$ from independent traces of $x$.  While the asymptotically best algorithms known for worst-case strings use $\exp(O(n^{1/3}))$ traces \cite{DOS17,NazarovPeres17}, several highly efficient algorithms are known \cite{PZ17,HPP18} for the \emph{average-case} version of the problem, in which the source string $x$ is chosen uniformly at random from $\zo^n$. In this paper we consider a generalization of the above-described average-case trace reconstruction problem, which we call \emph{average-case population recovery in the presence of insertions and deletions}.  In this problem, rather than a single unknown source string there is an unknown distribution over $s$ unknown source strings $x^1,\dots,x^s \in \zo^n$, and each sample given to the algorithm is independently generated by drawing some $x^i$ from this distribution and outputting an independent trace of $x^i$.
	
Building on the results of \cite{PZ17} and \cite{HPP18}, we give an efficient algorithm for the average-case population recovery problem in the presence of insertions and deletions. 
For any support size $1 \leq s \leq \smash{\exp ( \Theta(n^{1/3}) ) }$, for a $1-o(1)$ fraction of all $s$-element support sets $\{x^1,\dots,x^s\} \subset \zo^n$, for every distribution ${\cal D}$ supported on $\{x^1,\dots,x^s\},$ our algorithm can efficiently recover ${\cal D}$ up to total variation distance at most $\eps$ with high probability, given access to independent traces of independent draws from ${\cal D}$ as described above.
The running time of our algorithm is $\poly(n,s,1/\eps)$ and its sample complexity is $\poly ( s,1/\eps,\exp(\log^{1/3} n) ).$ This polynomial dependence on the support size $s$ is in sharp contrast with the \emph{worst-case} version of the problem (when $x^1,\dots,x^s$ may be any strings in $\zo^n$), in which the sample complexity of the most efficient known algorithm \cite{BCFSS19} is doubly exponential in $s$.
\end{abstract}


\newpage

\setcounter{page}{1}


\section{Introduction} \label{sec:intro}

{\bf Background:  Worst-case and average-case trace reconstruction.}
In the problem of \emph{trace reconstruction in the presence of insertions and deletions}, there is an unknown and arbitrary $n$-bit source string $x \in \zo^n$ and the goal is to reconstruct $x$ given access to independent \emph{traces} of $x$.  A \emph{trace} of $x$ is a copy that has been passed through a noise channel
which independently removes each bit of $x$ with some probability $q$ (the \emph{deletion rate}) and also independently inserts random bits according to some \emph{insertion rate} $q'$.\footnote{A detailed description of the channel is given in~\Cref{sec:preliminaries}.  Augmented variants of this insertion\hspace{0.04cm}/\hspace{0.04cm}deletion noise model can also be considered, for example allowing for bit-flips as well as insertions and deletions, but unlike deletions and insertions bit-flips can typically be handled in a straightforward fashion. In this paper we confine our attention to the insertion\hspace{0.04cm}/\hspace{0.04cm}deletion channel.}  Intuitively, the insertion-deletion channel (or even just the deletion channel with no insertions) is challenging to deal with because it is difficult to determine which coordinate of the source string (if any, if insertions are possible) is responsible for a given coordinate of a received trace.

The insertion\hspace{0.04cm}/\hspace{0.04cm}deletion trace reconstruction problem is motivated by connections to  recovery problems arising in biology (see e.g.~\cite{ADHR10,DR10,ABH14}) and has been the subject of considerable research, especially in recent years.  The worst-case version of this problem, in which the source string $x$ can be an arbitrary element of $\zo^n$, appears to be quite difficult even for small constant noise rates. In early work  \cite{BKKM04} gave an efficient algorithm that succeeds in the deletion-only model if the deletion rate $q$ is quite low, at most $O(1/n^{1/2+\eps}).$  Also in the \mbox{deletion-only} model, \cite{HMPW08} showed that $\exp(\tilde{O}(\sqrt{n}))$ many traces suffice for any constant deletion~rate $q$ bounded away from $1$.  More recently, this result was  improved in simultaneous and independent works of \cite{DOS17} and  \cite{NazarovPeres17}, each of which showed that for any constant insertion and deletion rates $q,q'$, $\smash{\exp(O(n^{1/3}))}$ traces suffice to reconstruct any $x \in \zo^n$. These algorithms, which~run in $\smash{\exp({O(n^{1/3})})}$ time, give the best results to date for the worst-case problem.  (On the lower bound side, recent work of \cite{HoldenLyons18} obtained an
$\tilde{\Omega}(n^{5/4})$ lower bound on the number of traces required from the deletion channel, improving an earlier $\Omega(n)$ lower bound due to  \cite{MPV14}. Later work of \cite{Cha19} improved this lower bound to $\tilde{\Omega}(n^{3/2})$.)

Since the worst-case trace reconstruction problem seems to be quite difficult, and since the assumption that the source string $x$ is completely arbitrary may be overly pessimistic in  various contexts, it is~natural to consider an \emph{average-case} version of the problem in which the source string $x$ is assumed to be drawn uniformly at random from $\zo^n$.  This average-case problem has been intensively studied, and interestingly it turns out to be significantly easier than the worst-case problem.  \cite{BKKM04} showed~that~for most source strings $x$, in the deletion-only setting only $O(\log n)$ traces suffice for deletion rates $q$ as~large as $O(1/\log n)$.  \cite{KM05} considered the insertion\hspace{0.04cm}/\hspace{0.04cm}deletion noise channel and obtained an $O(\log n)$-trace average-case algorithm for noise rates up to $O(1/\log^2 n)$, which was later improved to $O(1/\log n)$ in \cite{VS08}. \cite{HMPW08} were the first to give an efficient (using $\poly(n)$ traces) average-case algorithm, for the deletion-only model, that succeeds for some constant deletion rate (their algorithm could handle deletion rates up to about $q=0.07$).  Building on the worst-case results of \cite{DOS17} and \cite{NazarovPeres17}, a number of significantly stronger average-case results have recently been established.   \cite{PZ17} gave an average-case algorithm for the deletion-only problem which uses $\exp(O(\log^{1/2} n))$ many traces for any deletion rate $q<1/2.$  Improving on this, \cite{HPP18} gave an average-case algorithm which can handle both insertions and deletions at any constant rate and uses only $\exp(O(\log^{1/3} n))$ many traces.  (A simple reduction shows that any improvement on this sample complexity for the average-case problem would imply an improvement of the $\exp({O(n^{1/3}))}$-trace worst-case algorithms of \cite{DOS17} and~\cite{NazarovPeres17}.)

\medskip
\noindent
{\bf Beyond trace reconstruction: Population recovery from the deletion channel.}
Inspired by a related problem known as \emph{population recovery}, recent work of \cite{BCFSS19} has considered a challenging extension of the trace reconstruction problem.  Population recovery is the problem of learning an unknown \emph{distribution} over an unknown set of $n$-bit strings, given access to independent draws from the distribution that have been independently corrupted according to some noise channel.  Most research in population recovery has focused on two noise models, namely the \emph{bit-flip} noise channel (in which each coordinate is independently flipped with some fixed probability) and the \emph{erasure} noise channel (in which each coordinate is independently replaced by `?' with some fixed probability), both of which have been intensively studied, see e.g.~\cite{DRWY12,WY16,PSW17,DOS17poprec,MoitraSaks13,LZ15,DST16,DOS17poprec}.    \cite{BCFSS19} considered the problem of \emph{population recovery from the deletion channel}.  This is a generalization of the deletion-channel trace reconstruction problem:  now there is an unknown distribution over $s$ unknown source strings $x^1,\dots,x^s \in \zo^n$, and each sample provided to the learner is obtained by first drawing
  a string $x^i$ from this distribution and then passing it through the  deletion  noise channel.
It is clear that this problem is at least as difficult as the trace reconstruction problem (which is the $s=1$ case), and indeed having multiple source strings turns out to pose significant new challenges. \cite{BCFSS19} considered the worst-case version of this problem, and showed that any distribution ${\cal D}$ over any set of $s$ unknown source strings can be recovered to total variation distance $\eps$ given $2^{\sqrt{n} \cdot (\log n)^{O(s)}}/\eps^2$
many traces from the deletion channel. \cite{BCFSS19} also gave a lower bound, showing that for any $s \leq n^{0.49}$ at least $n^{\Omega(s)}$ many traces are required. {Population recovery-type problems have also been studied in the computational biology literature, specifically for DNA storage (see e.g.~\cite{organick2018random, DNAS}). In these settings, the population of strings corresponds to a collection of DNA sequences.}

Summarizing, while population recovery from the deletion channel is a natural problem, the above results (and the fact that it is at least as difficult as trace reconstruction) indicate that it is also a hard one.  Thus it is natural to investigate \emph{average-case} versions of this problem; this is the subject of the current work.

\subsection{Our result:  Average-case population recovery in the insertion\hspace{0.04cm}/\hspace{0.04cm}deletion model.}

In the average-case model we consider, there is a given \emph{population size $s\ge 1$}, i.e. there is a set $x^1,\dots,x^s$ of $s$ strings which are assumed to be drawn independently and uniformly from $\{0,1\}^n$.  Associated with this population is an \emph{arbitrary} vector of non-negative probability values $p_1,\dots,p_s$, where $p_i$ is the probability that the distribution ${\cal D}$ puts on string $x^i$. Thus in our model the support of the distribution is ``average-case'' but the actual distribution over that support is ``worst-case.''

Building on the work of \cite{HPP18}, our main result is a highly efficient algorithm for average-case population recovery in the presence of insertions and deletions.  We show that even for extremely large population sizes $s$ (up to $\exp({\Theta(n^{1/3})})$), the average-case population recovery problem can be solved by a highly efficient algorithm which has running time polynomial in $n$ (the length of unknown strings), $s$ (the population size), and $1/\eps$ (where $\eps$ is the total variation distance between $\calD$ and the distribution returned~by the algorithm).  The sample complexity of our
algorithm is polynomial in $s$, $1/\eps$, and $\exp(\log^{1/3} n)$.  Thus our algorithm extends the average-case trace reconstruction results of \cite{HPP18} to the more challenging setting of $s$-string population recovery with essentially the best possible dependence on the new parameters $s$ and $1/\eps$ (which are not present in the original trace reconstruction problem but are inherent in the population recovery problem).

In more detail, we prove the following theorem (the exact definition of a random trace drawn from the insertion\hspace{0.04cm}/\hspace{0.04cm}deletion noise channel ${\cal C}_{q,q'}({\cal D})$ will be given in~\Cref{sec:preliminaries}):

\begin{theorem} \label{thm:main}
Fix any two  constants $q,q'\in [0,1)$ as \emph{deletion} and \emph{insertion rates}, respectively.
There is an algorithm $A$ with the following property:  Let {$\delta_{\text{hard}} \geq \exp( -\Theta(n^{1/3}) )$} be a \emph{fraction of hard support sets}, let {$\delta_{\text{fail}} \geq \exp( -\Theta(n^{1/3}) )$} be a \emph{failure probability}, let $\eps \geq \exp( -\Theta ( n^{1/3} ) )$ be an \emph{accuracy parameter},
let $1 \leq s \leq \exp(\Theta(n^{1/3}))$ be a \emph{support size}, and let $x^1,\dots,x^s$ be a \emph{support set} (viewed as an ordered list of strings in $\zo^n$).
For at least a $(1-\delta_{\text{hard}})$-fraction of all $2^{ns}$ many possible $s$-element support~sets, it is the case that for any probability distribution ${\cal D}$ supported on $\{x^1,\dots,x^s\}$,  given $n,s,\eps,\delta_{\text{hard}}, \delta_{\text{fail}}$, and access to ${\cal C}_{q,q'}({\cal D})$, algorithm $A$ uses $\smash{ \text{\emph{poly}} ( s, 1 / \eps, \exp ( \log^{1/3} n ) , \exp ( \log^{1/3} ( 1 / \delta_{\text{hard}})) , \log(1 / \delta_{\text{fail}}) ) }$ random traces from ${\cal C}_{q,q'}({\cal D})$, runs in time ${ \text{\emph{poly}} ( n, s, 1 / \eps, 1 / \delta_{\text{hard}}, \log(1 / \delta_{\text{fail}}) ) }$ and has the following property:  with probability at least $1- \delta_{\text{fail}}$ it outputs a hypothesis distribution ${\cal D}'$ over $\zo^n$ such that $\dtv({\cal D},{\cal D}') \leq \eps.$
\end{theorem}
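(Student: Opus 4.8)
The plan is to reduce average-case population recovery to the single-string average-case trace reconstruction algorithm of \cite{HPP18}, combined with a clustering step on traces and a final linear-algebraic recovery of the mixing weights. The key structural fact that makes this possible is that for a $1-\delta_{\text{hard}}$ fraction of support sets $\{x^1,\dots,x^s\}$, the strings are ``well separated'' in a sense tailored to the insertion/deletion channel: since each $x^i$ is uniform, with very high probability all pairs $x^i,x^j$ differ substantially in their short subword statistics (e.g.\ the multiset of length-$\ell$ contiguous substrings for $\ell = \Theta(\log s + \log(1/\delta_{\text{hard}}))$), and moreover each individual $x^i$ is ``typical'' in the sense required by the \cite{HPP18} analysis (its length-$O(\log^{2/3} n)$ local statistics, or whatever local structure \cite{HPP18} relies on, behave as for a random string). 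A union bound over the $\binom{s}{2}$ pairs and $s$ strings, using $s \le \exp(\Theta(n^{1/3}))$, shows that the bad event has probability at most $\delta_{\text{hard}}$ provided we take the separation length $\ell = \Theta(n^{1/3})$ worth of slack, which is exactly what the stated bounds on $s$ and $\delta_{\text{hard}}$ afford.

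The algorithm itself proceeds in three stages. \emph{Stage 1 (clustering).} Draw $\poly(s,1/\eps,\exp(\log^{1/3}n),\exp(\log^{1/3}(1/\delta_{\text{hard}})),\log(1/\delta_{\text{fail}}))$ traces. Using the separation property, define a statistic on traces (for instance, empirical counts of short subwords, passed through the known channel to get a ``deconvolved'' estimate of the source's subword profile) such that two traces coming from the same $x^i$ land close together and two traces from different sources land far apart; partition the traces into $s$ clusters accordingly. The subtlety is that a source string $x^i$ with tiny weight $p_i$ may contribute very few traces, so some clusters may be empty — this is fine, since those strings contribute negligibly to $\calD$; we only need to recover the strings whose weight exceeds roughly $\eps/s$, and a Chernoff bound guarantees each such string gets $\poly(\dots)$ traces in the sample. \emph{Stage 2 (per-cluster reconstruction).} On each nonempty cluster, run the \cite{HPP18} average-case trace reconstruction algorithm; since the cluster consists of genuine independent traces of a single ``typical'' string $x^i$, the algorithm recovers $x^i$ exactly with probability $1-\delta_{\text{fail}}/s$ using $\exp(O(\log^{1/3}n))\cdot\polylog(1/\delta_{\text{fail}})$ traces. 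This yields a candidate support $\{\hat x^1,\dots,\hat x^{s'}\}$ with $s' \le s$. \emph{Stage 3 (weight estimation).} Estimate the mixing weights $p_i$: having recovered the strings, the induced distribution over traces is the mixture $\sum_i p_i \,\calC_{q,q'}(\hat x^i)$; estimate enough coordinates of this trace distribution (again short-subword statistics suffice, by the separation of the $\hat x^i$) to set up a well-conditioned linear system in the $p_i$, solve it, and renormalize. Output $\calD' = \sum_i \hat p_i \,\mathbf{1}_{\hat x^i}$.

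The correctness argument then bounds $\dtv(\calD,\calD')$ by three contributions: strings of $\calD$ with weight below $\eps/(3s)$ that we may miss entirely (total mass $< \eps/3$); strings we do recover but with weight error from the linear system ($< \eps/3$ given a condition-number bound on the subword-statistics matrix, which follows from separation); and misclassification/reconstruction failures, which happen with probability $< \delta_{\text{fail}}$ by the union bound over $\le s$ clusters. The running time is $\poly(n,s,1/\eps,1/\delta_{\text{hard}},\log(1/\delta_{\text{fail}}))$ since \cite{HPP18} runs in $\poly(n)$ time per cluster, clustering and the linear-algebra step are polynomial, and the only exponential-in-$\log^{1/3}$ quantities appear in the \emph{sample} complexity.

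I expect the main obstacle to be \textbf{Stage 1, the clustering step}, specifically proving that short-subword statistics of traces (after accounting for the insertion/deletion channel) actually separate the clusters robustly. Two things must be made to work simultaneously: (i) the concentration of the empirical trace statistic around its expectation must be tight enough that $\poly(\dots)$ samples suffice even though there are up to $\exp(\Theta(n^{1/3}))$ strings to distinguish — this forces the relevant statistic to have dimension/complexity only $\exp(O(\log^{1/3}(1/\delta_{\text{hard}}) + \log s))$, matching the claimed sample bound; and (ii) the ``signal'' — the gap between same-source and different-source statistics — must be shown, for a $1-\delta_{\text{hard}}$ fraction of support sets, to exceed this sampling noise. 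This is where the randomness of the $x^i$ is essential and where the interplay between $s$, $\delta_{\text{hard}}$, and $n$ in the theorem statement is pinned down; handling the channel's distortion of subword counts (insertions create spurious subwords, deletions destroy real ones, but both in a predictable, invertible-in-expectation way) is the technical heart of the argument.
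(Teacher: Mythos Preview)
Your three-stage architecture (cluster the traces, reconstruct each source from its cluster, then estimate the weights) matches the paper exactly. But the technical content of each stage differs substantially from what the paper does, and in two of the three stages the paper's route is markedly simpler than what you sketch.

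\textbf{Clustering.} You propose to cluster via deconvolved short-subword statistics and flag this as the main obstacle. The paper sidesteps subword statistics entirely. Its pairwise test $A_{\text{cluster}}(n,z,z')$ breaks each trace into $\Theta(n^{1/3})$ contiguous blocks of length $t=n^{2/3}$, computes $Z_\ell=\sum_{i\in I_\ell} z_i$ (over $\pm 1$) for each block, and thresholds how many blocks have $|Z_\ell - Z'_\ell|\ge \beta\sqrt{2t}$. The point is that the $\ell$-th block of a trace corresponds, up to $o(t)$ drift, to a fixed window of the source; so if $z,z'$ share a source, a constant fraction of source bits contribute to both $Z_\ell$ and $Z'_\ell$ and cancel, making $Z_\ell-Z'_\ell$ a sum of only $(1-\tau)2t$ random signs, whereas for independent sources it is a full $2t$-term sum. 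A Berry--Esseen comparison separates the two regimes by a constant in the tail probability, and a Chernoff bound over the $\Theta(n^{1/3})$ blocks gives failure probability $\exp(-\Omega(n^{1/3}))$. No deconvolution, no subword counting, no condition-number argument --- the channel's effect is absorbed entirely into the concentration of the block boundaries (Lemma~\ref{lem:keytechnical}).

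\textbf{Weight estimation and the HPP extension.} Once clustering is perfect you already know which source generated each trace, so the paper just outputs the empirical cluster frequencies (after dropping clusters of relative size below $\eps/(2s)$) and invokes the folklore discrete-distribution learning bound; your linear-system step is unnecessary. Separately, your union bound ``each $x^i$ is typical for \cite{HPP18}'' does not go through as stated: \cite{HPP18} succeeds only on a $1-O(1/n)$ fraction of strings, so for $s\gg n$ the union bound over $s$ strings fails. The paper patches this with a short black-box reduction (\Cref{thm:HPPplus}): pad $x$ with $N-n$ fresh random bits, run \cite{HPP18} at length $N=\Theta(1/\tau)$, and read off the first $n$ bits; this drives the bad fraction down to any $\tau$ at cost $\exp(O(\log^{1/3}(1/\tau)))$ in samples, which is exactly where the $\exp(\log^{1/3}(1/\delta_{\text{hard}}))$ term in the sample bound originates.
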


\noindent {\bf Discussion.}
Taken together with the recent results of \cite{BCFSS19},~\Cref{thm:main} shows that the average-case and worse-case versions of population recovery in the presence of insertions and deletions have dramatically different complexities. The best known algorithm for the population size-$s$ worst-case population recovery problem \cite{BCFSS19} has a doubly exponential dependence on $s$; even for $s$ constant this sample complexity is significantly worse than the best known sample complexity for the $s=1$ worst-case trace reconstruction problem, which is $\exp(\Theta(n^{1/3}))$ by \cite{DOS17,NazarovPeres17}.  The $n^{\Omega(s)}$ sample complexity lower bound given in \cite{BCFSS19} shows that an exponential dependence on $s$ is inherent for worst-case population recovery.  In contrast, \Cref{thm:main} shows that a \emph{polynomial} sample complexity (and running time) dependence on $s$ is achievable for the average-case problem, and that passing from $s=1$  to larger values of $s$ does not incur much increase in complexity for the average-case problem.

In independent work, \cite{KMMP19} studied a different generalization of trace reconstruction which they called \emph{matrix reconstruction}. Instead of reconstructing a string by sampling traces where each character of the string has some probability of being deleted, the goal in \emph{matrix reconstruction} is to reconstruct a matrix by sampling traces where each row and column of the matrix has some probability of being deleted. They used similar techniques to those used in this paper.

\subsection{Our techniques}

A natural way to approach our problem is to attempt to reduce it to the $s=1$ case, which as described above is just the average-case trace reconstruction problem which was solved by  \cite{HPP18}. However, two challenges arise in carrying out such a reduction.  The first challenge is that the analysis of \cite{HPP18} only gives an algorithm which succeeds on a $1-\Theta(1/n)$ fraction of all source strings $x\in \{0,1\}^n$. So if the population size $s$ is much larger than $n$, then a random population of $s$ source strings will with high probability contain $\Theta(s/n)$ many ``hard-to-reconstruct'' strings. It is not clear how to proceed if these hard-to-reconstruct strings have significant weight under the distribution ${\cal D}$ (which they may, since the distribution ${\cal D}$ over the $s$ source strings is assumed to be completely arbitrary).

We get around this challenge by showing that for any arbitrarily small $\delta$, the algorithm of~\cite{HPP18} can be extended in a black-box way to succeed on any $1-\delta$ fraction of all source strings $x \in \zo^n$ (at the cost of a modest increase in running time and sample complexity depending on the value of $\delta$).  By taking $\delta \ll 1/s$, with high probability the random population will consist entirely of source strings $x^i$ each of which could be reconstructed in isolation
   if we were given only traces coming from $x^i$.

The second (main) challenge, of course, is that we are not given traces from each individual string $x^i$ in isolation, but rather we are given a mixture of traces from all $s$ strings $x^1,\dots,x^s$.  The main contribution of our work is a clustering procedure which lets us (with high probability over a population of $s$ random source strings) correctly group together traces that came from the same source string.  Given the ability to do such clustering, we can indeed use the \cite{HPP18} algorithm on each obtained cluster to identify each of the source strings which has non-negligible weight under the distribution ${\cal D}$, and given the identity of these source strings that each trace came from, it is straightforward to output a high-accuracy hypothesis for the unknown distribution ${\cal D}$ over these strings.

The core clustering procedure which we develop is a simple algorithm which we call $A_{\textrm{cluster}}$ (see~\Cref{fig:clustering} in~\Cref{sec:clustering}).  This algorithm takes two traces $a$ and $b$ as input, and outputs either ``same'' or ``different.'' Its performance guarantee is the following: If $a$ and $b$ were generated as two independent traces from \emph{the same} randomly chosen source string $\bx$, then with high probability $A_{\mathrm{cluster}}$ outputs ``same,'' whereas if $a,b$ were generated as two traces coming from \emph{two independent} uniform random source strings $\bx^1$ and $\bx^2$ respectively, then with high probability $A_{\mathrm{cluster}}$ outputs ``different.''  (See~\Cref{thm:clustering} for a detailed statement.)

The idea underlying $A_{\mathrm{cluster}}$ is as follows.  Given a trace $a$ (which we view as a string over $\{-1,1\}$), imagine breaking it up into contiguous segments (which we call ``\emph{blocks}'') and summing the $\pm 1$ bits within each block, and let $\ssum(a,i)$ denote the sum of the bits in the $i$-th block.  We do the same for the trace $b$ and obtaining a value $\ssum(b,i)$ from the $i$-th block of $b$.  The high-level idea is that,
for a suitable
choice of the block size, in general there will be significant overlap between the positions in $\{1,\ldots,n\}$ (of the source string) that gave rise to the elements of the $i$-th block
  of $a$ and the $i$-th block of $b$.  As a result,
\begin{itemize}
\item On the one hand, if $a$ and $b$ came from the same source $\bx$, then there will be significant cancellation in the difference $\ssum(a,i) - \ssum(b,i)$ and this difference will tend to be ``small'' in magnitude.
\item On the other hand, if $a$ and $b$ came from independent source strings $\bx^1$ and $\bx^2$, then there will be no such cancellation and the difference $\ssum(a,i)-\ssum(b,i)$ will not be so ``small'' in magnitude.
\end{itemize}
Therefore by checking the magnitude of $\ssum(a,i)-\ssum(b,i)$ across many different blocks $i$, it is possible to determine with high confidence whether or not $a$ and $b$ came from the same source string or not.


\section{Preliminaries} \label{sec:preliminaries}

We write $[n]=\{1,\ldots,n\}$ for a positive integer $n$.
We index strings $x \in \zo^n$ as $x=(x_1,\dots,x_n).$ We use {\bf bold font} to denote random variables (which may be real-valued, integer-valued, $\{0,1\}^\ast$-valued, etc.).

We consider an insertion-deletion noise channel ${\cal C}_{q,q'}$ defined as by \cite{HPP18}.  Given a \emph{deletion rate} $q$ and an \emph{insertion rate} $q'$, both in $[0,1)$, the insertion-deletion channel ${\cal C}_{q,q'}$ acts on an $x \in$ $\zo^n$ as follows: First, for each $j\in [n]$, $\bG_j(q')-1$ many independent and uniform bits from~$\zo$ are inserted before the $j$-th bit of $x$, where  $\bG_1(q'),\dots,\bG_n(q')$ are i.i.d.~geometric random variables satisfying
\[
\Pr\big[\bG_j(q')=\ell\big] = (q')^{\ell-1} (1-q')
\]
(i.e.~each $\bG_j(q')$ is distributed as Geometric$(1-q')$). Then each bit of the resulting string is independently deleted with probability $q$. The resulting string is the output from ${\cal C}_{q,q'}(x)$, and we write ``$\by \sim {\cal C}_{q,q'}(x)$'' to indicate that $\by$ is a random trace generated from $x$ in this way. If ${\cal D}$ is a distribution over $n$-bit strings, we write
``$\by \sim {\calC}_{q,q'}({\cal D})$'' to indicate that $\by$ is obtained by first drawing 
   $\bx \sim {\cal D}$ and then drawing $\by \sim {\cal C}_{q,q'}(\bx)$.


\section{Achieving an arbitrarily small fraction of ``hard'' strings in average-case trace reconstruction} \label{sec:reduction}

Fix any constants $q,q'\in [0,1)$ as deletion and insertion rates, receptively.
We will use asymptotic notation such as $O(\cdot)$ and $\Theta(\cdot)$ to
  hide constants that depend on $q$ and $q'$.
  
The main result of \cite{HPP18} is an algorithm which successfully performs trace reconstruction on at least $(1-O(1)/n)$-fraction of all $n$-bit strings (which is 
  $1-M/n$ for some constant $M=M(q,q')$ that only depends on $q$ and $q'$).
In more detail, their main result is the following:

\begin{theorem} \label{thm:HPP}
Fix any constants $q,q'\in [0,1).$
There is a deterministic algorithm $A_{\text{average-case}}$ with the following property:
It is given (1) a confidence parameter $\delta>0$, (2) the length $n$ of
  an unknown string $x\in \{0,1\}^n$ and (3)
  access to $\calC_{q,q'}(x)$,
  uses 
\begin{equation}\label{hehehehe1}
\exp\left(O\left( \log^{1/3}n\right)\right)\cdot \log\left(1/\delta\right)
\end{equation}
  traces 
  drawn from ${\cal C}_{q,q'}(x)$, and runs in time 
  $\poly(n,\log(1/\delta))$.
%
For at least \mbox{$(1 - O(1/n))$}-fraction~of all strings $x \in \zo^n$,\footnote{Theorem~1 of \cite{HPP18} only claims a $1-o_n(1)$ fraction of strings $x$, but the proof shows that the fraction is $1 - O_{q,q'}(1)/n$; see e.g.~the discussion at the beginning of Section~1.3 of \cite{HPP18}.} it is the case that,
  algorithm $A_{\text{average-case}}\hspace{0.04cm}(\delta,n,\calC_{q,q'}(x))$ outputs 
  the string $x$ with probability at least $1-\delta$
  \emph{(}over the randomness of traces drawn from $\calC_{q,q'}(x)$\emph{)}. 
\end{theorem}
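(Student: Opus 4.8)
Up to the sharpening of the success fraction noted in the footnote, this is Theorem~1 of \cite{HPP18}, so the plan is to invoke their algorithm and analysis directly; what follows records the shape of that argument and the single point at which some care is needed to get the stated $1-O(1/n)$ (rather than $1-o_n(1)$) fraction of good strings. The algorithm of \cite{HPP18} reconstructs $x$ essentially locally --- position by position, or short block by short block --- using the portion of $x$ recovered so far to \emph{align} each incoming trace, i.e.\ to determine, up to a small window, which coordinate of $x$ produced which coordinate of the trace. The structural fact about a uniform random $x \in \zo^n$ that makes this possible is that with probability $1 - O(1/n)$ all of its contiguous substrings of length $\Theta(\log n)$ are distinct; such substrings then serve as unambiguous ``anchors,'' approximate copies of which can be located inside a trace drawn from ${\cal C}_{q,q'}(x)$.

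The quantitative core of the argument --- and the reason the sample complexity is subpolynomial rather than polylogarithmic --- is that alignment is only approximate: the displacement between a surviving source coordinate and its image in the trace performs a random walk, so after $k$ steps it has fluctuated by $\Theta(\sqrt{k})$. Hence one cannot read $x_k$ off a single well-chosen position of a single trace; instead one forms a carefully weighted statistic that aggregates many positions of the trace and averages it over many independent traces, and shows that its expectation encodes $x_k$ (and a few neighbouring bits) with a signal that, once summed, dominates the noise. Adapting the complex-analytic single-bit mean-based estimator of \cite{DOS17,NazarovPeres17}, which operates at scale $n$ and ultimately rests on a Littlewood-type bound for $\pm 1$ polynomials on a short arc, so that it instead operates at the local scale $\Theta(\log n)$ afforded by anchoring, is what replaces the $\exp(O(n^{1/3}))$ factor of the worst-case algorithms by the $\exp(O(\log^{1/3} n))$ factor here; the additional $\log(1/\delta)$ factor is routine amplification (run the estimator on $O(\log(1/\delta))$ independent batches of traces and take a median).

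The step I expect to be the main obstacle is exactly the one flagged in the footnote: one must enumerate every failure mode --- short substrings of $x$ not all distinct, an anchor that fails to reappear recognizably in a trace, the alignment walk drifting past the window before the next anchor, and so on --- and bound the measure of source strings on which \emph{any} of them occurs by $O_{q,q'}(1)/n$ rather than merely $o_n(1)$. This is a union bound over the $O(n)$ length-$\Theta(\log n)$ windows, and it goes through because each individual bad event has probability $n^{-\Omega(1)}$ by the birthday bound on substrings and by tail bounds on the alignment random walk; the only real work is arranging the constants (in particular the hidden constant in the anchor length $\Theta(\log n)$, which depends on $q$ and $q'$) so that the union bound still yields $O_{q,q'}(1)/n$. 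All of this is carried out in \cite{HPP18}. The running-time bound $\poly(n,\log(1/\delta))$ is then immediate, since each reconstruction step is a local search inside each trace followed by an arithmetic evaluation of the weighted statistic.
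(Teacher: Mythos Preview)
Your proposal is correct and matches the paper's approach: the paper does not prove \Cref{thm:HPP} at all but simply cites it as the main result of \cite{HPP18}, with the footnote noting that their proof in fact yields the $1-O_{q,q'}(1)/n$ fraction. Your sketch of the \cite{HPP18} argument is accurate and more informative than what the paper provides, but since this theorem is used as a black box here, a bare citation is all that is required.
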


Note that in the above theorem the fraction of ``hard'' strings $x \in \zo^n$ on which the \cite{HPP18} algorithm does not succeed is $\Theta(1/n)$. In our setting, to achieve results for general population sizes $s$, we may require the fraction of ``hard'' strings on which the reconstruction algorithm does not succeed to be smaller than this; to see this, suppose for example that we are considering a population of size $s=n^2$.  If a $\Theta(1/n)$ fraction of strings are ``hard'' and $n^2$ strings are chosen uniformly at random to form the support of our distribution ${\cal D}$, then we would expect $\Theta(n)$ many hard strings to be present in the support set (i.e.~the population) of $n^2$ strings. If the unknown distribution over the $n^2$ strings (which, recall, may be any distribution over that support) puts a significant amount of its probability mass on these hard strings, then it may not be possible to successfully recover the population.

In this section we show that the fraction of strings in $\zo^n$ that are ``hard'' can be driven down from $\Theta(1/n)$ to an arbitrarily small fraction in the \cite{HPP18} result, at the cost of a corresponding modest increase in the sample complexity and running time of the algorithm. (As suggested by the discussion given above, such an extension is crucial for us to be able to handle populations of size $s=\omega(n).$)  It may be possible to verify this directly via a careful reworking of the \cite{HPP18} proof, but that proof is involved and such a verification would be quite tedious. Instead we give a simple and direct argument which uses~\Cref{thm:HPP} in a black-box way to prove the following generalization of it, in which only an arbitrarily small fraction of strings  are hard to reconstruct:

\begin{theorem} \label{thm:HPPplus}
Fix any constants $q,q'\in [0,1)$.
There is a deterministic algorithm $A'_{\text{average-case}}$~with the following property:
It is given (1) 
  $\tau>0$ as the desired fraction of hard strings,
  (2) a confidence parameter $\delta$, (3) the length $n$ of the unknown string 
  $x\in \{0,1\}^n$ and (4) access to $\calC_{q,q'}(x)$. It  uses 
\[
\exp\left(O\left( \big(\log\hspace{0.04cm} \max\{n,1/\tau\}\big)^{1/3}\right)\right)\cdot \log\left(1/\delta\right)
\]
many traces drawn from $\calC_{q,q'}(x)$, and runs in time 
  $\poly(\max\{n,1/\tau\},\log(1/\delta)).$ 
For at least $1-\tau$ fraction of all strings $x\in \{0,1\}^n$,
  it is the case that algorithm $A'_{\text{average-case}}\hspace{0.04cm}(\tau,\delta,n,\calC_{q,q'}(x))$ 
  outputs the string $x$ with probability at least $1-\delta$.
   
\end{theorem}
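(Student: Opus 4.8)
The plan is to reduce Theorem~\ref{thm:HPPplus} to Theorem~\ref{thm:HPP} in a black-box way, using a padding-and-restriction argument. The key observation is that Theorem~\ref{thm:HPP} already handles all but a $\Theta(1/N)$ fraction of strings of length $N$; so to drive the hard fraction below $\tau$, it suffices to run the \cite{HPP18} algorithm on an artificially lengthened string of length roughly $N := \max\{n, C/\tau\}$ for a suitable constant $C$. Concretely, given the unknown $x \in \zo^n$, I would consider the padded string $x' := x \circ z \in \zo^N$ where $z \in \zo^{N-n}$ is a fixed, publicly known ``padding'' suffix (say the all-zeros string, or a pseudorandom-looking fixed string — see the obstacle paragraph). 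If we could obtain traces of $x'$ from $\calC_{q,q'}(x')$, then by Theorem~\ref{thm:HPP} applied with length $N$, for at least a $1 - O(1/N) \ge 1 - \tau$ fraction of choices of $x'$ the algorithm $A_{\text{average-case}}$ would recover $x'$, hence $x$, using $\exp(O(\log^{1/3} N)) \cdot \log(1/\delta)$ traces and $\poly(N, \log(1/\delta))$ time, which matches the claimed bounds since $N = \max\{n,1/\tau\}$ up to constants.

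The two things to check are: (i) that we can simulate a trace of $x' = x \circ z$ given a trace of $x$, and (ii) that the ``hard'' $x'$ translate back to a small fraction of hard $x$. For (i): a trace of $x \circ z$ through $\calC_{q,q'}$ is distributed exactly as the concatenation of an independent trace of $x$ and an independent trace of $z$ (the insertion/deletion channel acts independently coordinatewise, and the geometric insertions before position $n+1$ split cleanly). Wait — actually one must be slightly careful: the channel inserts $\bG_j(q') - 1$ bits \emph{before} the $j$-th bit, so the boundary between the image of $x$ and the image of $z$ is exactly between the image of $x_n$ and the insertions-before-$z_1$; thus a trace of $x \circ z$ equals (trace of $x$) $\circ$ (trace of $z$) with the two pieces generated independently. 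So given a sample $\by \sim \calC_{q,q'}(x)$, the algorithm draws its own independent $\bw \sim \calC_{q,q'}(z)$ (it knows $z$), forms $\by \circ \bw$, and feeds this to $A_{\text{average-case}}$ with length parameter $N$. This perfectly simulates $\calC_{q,q'}(x')$, so the sample and time bounds carry over directly.

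For (ii): the map $x \mapsto x' = x \circ z$ is injective, and the set $H \subseteq \zo^N$ of strings on which Theorem~\ref{thm:HPP} fails has density at most $O(1/N) \le \tau$ (choosing the constant $C$ in $N = \max\{n, C/\tau\}$ large enough). The preimage of $H$ under $x \mapsto x \circ z$ is the set of bad $x$, but its density \emph{within} $\zo^n$ could in principle be larger than the density of $H$ within $\zo^N$, since we are restricting to a codimension-$(N-n)$ affine subspace. Here is the crux: if $z$ is \emph{fixed}, then $\{x \circ z : x \in \zo^n\}$ is a single coset of $\zo^n \times \{0\}^{N-n}$, and it is entirely possible that all of $H$ is concentrated on that coset. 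This is the main obstacle, and the fix is to not fix $z$ but to \emph{average over $z$}: since $H$ has density $\le \tau$ in $\zo^N$, by Fubini/averaging there exists \emph{some} suffix $z^\star \in \zo^{N-n}$ such that the density of $\{x : x \circ z^\star \in H\}$ within $\zo^n$ is at most $\tau$. We cannot compute $z^\star$, but we do not need to: the statement of Theorem~\ref{thm:HPPplus} only requires that \emph{for at least a $1-\tau$ fraction of $x$} the algorithm succeeds, and the algorithm is allowed to depend on $n$ and $\tau$ — so we can hardwire this (non-constructively existent) $z^\star$ into the algorithm's description.

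Alternatively, and more cleanly, one can argue that the set of \emph{all} bad $x$ (over all suffixes) still can't be too large: define $B = \{x \in \zo^n : \text{density over } z \text{ of } [x \circ z \in H] > \sqrt{\tau}\}$; then $|B|/2^n \le \sqrt{\tau} \cdot (|H|/2^N) / \sqrt{\tau} \le \ldots$ — actually the averaging bound above is simpler and sufficient, so I would just invoke: since $\mathbb{E}_{z}[\Pr_x[x\circ z \in H]] = |H|/2^N \le \tau$, fix $z^\star$ achieving at most the mean. The rest is bookkeeping: set $N = \max\{n, \lceil C/\tau\rceil\}$, note $A_{\text{average-case}}$ with parameters $\delta, N$ uses $\exp(O(\log^{1/3}N))\log(1/\delta)$ traces and $\poly(N,\log(1/\delta))$ time, and the simulation overhead (drawing $\bw \sim \calC_{q,q'}(z^\star)$, which has expected length $O(N)$) is negligible. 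I expect the only subtlety a careful reader will want spelled out is the independence-of-concatenated-traces claim in (i) and the non-constructive choice of $z^\star$ in (ii); both are short.
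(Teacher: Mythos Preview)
Your proposal is correct and shares the paper's core idea---pad $x$ to length $N = \Theta(\max\{n,1/\tau\})$, simulate traces of $x \circ z$ by concatenating a genuine trace of $x$ with a locally generated trace of $z$ (your point (i) is exactly the observation the paper uses), and invoke Theorem~\ref{thm:HPP} at length $N$---but the two proofs diverge precisely at the step you flag as ``the crux.'' You fix a single suffix $z^\star$ non-constructively by averaging, which lets you inherit the $1-\delta$ guarantee of Theorem~\ref{thm:HPP} directly with no amplification; the paper instead draws a \emph{fresh random} $\bz \in \zo^{N-n}$ on each run, uses Markov to show that at most a $\tau$ fraction of $x$ satisfy $\gamma_x := \Pr_{\bz}[x\circ\bz\text{ is hard}]\ge 1/4$, obtains only constant (${>}\,5/8$) success probability per run for the remaining $x$, and then amplifies to $1-\delta$ via $O(\log(1/\delta))$ independent repetitions with majority vote. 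Your route is shorter, but hardwiring $z^\star=z^\star(n,\tau)$ makes the procedure non-uniform: since $n$ and $\tau$ are \emph{inputs} to the algorithm, a single algorithm cannot carry an infinite table of suffixes, and you offer no efficient way to compute $z^\star$ from them. The paper's random-suffix route is fully uniform at the price of the extra amplification step. Both arguments are valid proofs of the stated existence claim.
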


We note that the sample complexity of~\Cref{thm:HPPplus} interpolates smoothly between the average-case result of \cite{HPP18}, in which a $\tau = \Theta(1/n)$ fraction of strings are hard, and the worst-case results of  \cite{DOS17,NazarovPeres17}, in which no strings in $\zo^n$ (equivalently, at most a $\tau = 1/2^{n+1}$ fraction of strings) are hard.

The high-level idea underlying~\Cref{thm:HPPplus} is very simple: By padding the input string $x$ (which should be thought of as uniformly random over $\zo^n$) with random bits, it is possible to obtain a uniformly random $N$-bit string, and by running algorithm $A_{\text{average-case}}$ over this string of length $N$, with probability $1-\Theta(1/N)$ it is possible to reconstruct this $N$-bit string, from which the original input string $x$ can be reconstructed. Taking $N$ to be suitably large this yields the desired result. We give a detailed proof below.

Let $M$ be the constant hidden in the $O(1/n)$ in Theorem \ref{thm:HPP}.
We~note that if $\tau \geq M/n$ then we may simply use $A_{\text{average-case}}$, so we henceforth assume that $\tau < M/n.$

\medskip
\noindent {\bf The algorithm.}  Algorithm $A'_{\text{average-case}}
(\tau,\delta,n,\calC_{q,q'}(x))$ works by running an auxiliary algorithm 
$A^*(\tau,n,\calC_{q,q'}(x))$ (which always outputs an $n$-bit string) $O(\log(1/\delta))$ many times.  If at least $9/16$ of the $O(\log(1/\delta))$ runs of $A^*$ yield the same $n$-bit string then this is the output of $A'_{\text{average-case}}$, and otherwise $A'_{\text{average-case}}$ outputs ``failure.''  Below we will show that for at   least $1-\tau$ fraction of all strings $x\in \{0,1\}^n$, 
$A^*(\tau,n,\calC_{q,q'}(x))$ outputs the correct string $x$ with 
probability at least $5/8$.
It follows from the Chernoff bound that $A'_{\text{average-case}}$
achieves the desired $1-\delta$ success probability.

We turn to describing and analyzing $A^*\hspace{0.04cm}(\tau,n,\calC_{q,q'}(x))$, which  works as follows:
\begin{flushleft}
	\begin{enumerate}
		\item It draws a string $\bz$ uniformly from $\zo^{N-n}$ (the value of $N>n$ will be specified later).
		
		\item Let $m=m(N)$ be the following parameter:
		\[m(N)=\exp\left( O\left(\log^{1/3} N\right)\right)\cdot \log\left(1/\delta'\right),\] where 
		$\delta' = 1/8$. This is the number of traces needed by
		$A_{\text{average-case}}$ to achieve confidence parameter $\delta'$ on
		strings of length $N$ (as in (\ref{hehehehe1})). 
		For $m$ times, algorithm $A^*$ independently repeats the following:  at the $i$-th repetition it draws a string $\by^{(i)} \sim {\cal C}_{q,q'}(x)$, constructs a string $\by'^{(i)}$ that is distributed according to ${\cal C}_{q,q'}(\bz)$, and constructs $\ba^{(i)} := \by^{(i)} \circ \by'^{(i)}$ which is the concatenation of $\by^{(i)}$ and $\by'^{(i)}.$
		
		\item Finally, it uses the $m$ strings $\ba^{(1)},\dots,\ba^{(m)}$ to run algorithm $A_{\text{average-case}}\hspace{0.04cm}$ with length $N$ and 
		confidence parameter $\delta'$.  
		Let $w \in \zo^{N}$ be the  string that $A_{\text{average-case}}$ returns.  The output of $A^*$ is $w_1 w_2 \cdots w_n$, the first $n$ characters of $w$.
	\end{enumerate}
\end{flushleft}

\noindent
{\bf Proof of correctness.}  We first observe that (as an immediate consequence of the definition of the noise channel ${\cal C}_{q,q'}$) each string $\ba^{(i)} = \by^{(i)} \circ \by'^{(i)}$ generated as in Step~2 of $A^*\hspace{0.04cm}(\tau,n,\calC_{q,q'}(x))$  is distributed precisely as a draw from ${\cal C}_{q,q'}(x \circ \bz).$ By the choice of $m=m(N)$ in Step~2, the strings $\ba^{(1)},\dots,\ba^{(m)}$ constitute precisely the required traces for a run of $A_{\text{average-case}}$ on the $N$-bit string $x \circ \bz$.

Let us say that the strings in $\zo^N$ which $A_{\text{average-case}}$ 
(with parameters $\delta'$ and $N$) correctly reconstructs with probability at least $1-\delta'$ are \emph{good} strings, and that the other strings in $\zo^N$ are \emph{bad} strings.  By~\Cref{thm:HPP}, at most an ($M/N$)-fraction of all strings in $\zo^N$ are bad.  The value of $N$ is set to $N := 4M/\tau$,\footnote{Note that $N \geq 4n$ using $\tau<M/N$, so $N-n>0$ and indeed Step~1 makes sense.} so $M/N = \tau/4$, and it is the case that at most a $\tau/4$ fraction of strings in $\zo^N$ are bad.
For each $x \in \zo^n$, let $\gamma_x$ denote the fraction of strings $z \in \zo^{N-n}$ such that $x \circ z$ is bad. The average over all $x \in \zo^n$ of $\gamma_x$ is at most $\tau/4$, and consequently at most a $\tau$ fraction of strings $x$ have $\gamma_x \geq 1/4.$  

\begin{claim} \label{claim:a}
	If $x \in \zo^n$ has $\gamma_x < 1/4$, then $A^*\hspace{0.04cm}(\tau,n,\calC_{q,q'}(x))$ outputs $x$ with probability at least $5/8.$
\end{claim}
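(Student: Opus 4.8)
The plan is to show that when $\gamma_x < 1/4$, the auxiliary algorithm $A^*$ succeeds with good probability by combining two sources of randomness: the random padding string $\bz$, and the internal randomness of $A_{\text{average-case}}$ on the $N$-bit input. First I would recall the key structural observation already established in the proof: each $\ba^{(i)}$ is distributed exactly as a trace ${\cal C}_{q,q'}(x \circ \bz)$, so running $A_{\text{average-case}}$ on $\ba^{(1)},\dots,\ba^{(m)}$ is precisely a faithful run of $A_{\text{average-case}}$ on the $N$-bit string $x \circ \bz$ with confidence parameter $\delta' = 1/8$. Hence, conditioned on the event that $\bz$ is such that $x \circ \bz$ is a \emph{good} string (in the sense defined just before the claim), $A_{\text{average-case}}$ returns $w = x \circ \bz$ with probability at least $1 - \delta' = 7/8$, and in that case the output $w_1 \cdots w_n = x$, so $A^*$ outputs $x$.

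Next I would put the two events together. The event ``$x \circ \bz$ is good'' has probability $1 - \gamma_x > 3/4$ over the random draw of $\bz$. By a union bound over the failure of this event (probability $\le \gamma_x < 1/4$) and the failure of $A_{\text{average-case}}$ conditioned on $\bz$ being good (probability $\le 1/8$), the probability that $A^*$ fails to output $x$ is at most $\gamma_x + \delta' < 1/4 + 1/8 = 3/8$, so $A^*$ outputs $x$ with probability at least $5/8$. This is exactly the claimed bound. I would make sure to phrase the conditioning carefully so that the independence of $\bz$ from the internal randomness of $A_{\text{average-case}}$ is used correctly — these are genuinely independent, so the conditional-probability argument is clean.

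There is no serious obstacle here; the only point requiring a little care is bookkeeping the quantifiers and the distinction between ``bad fraction over all $N$-bit strings'' ($\le \tau/4$), ``$\gamma_x$'' (the bad fraction among padded extensions of a \emph{fixed} $x$), and the Markov-type step (already done before the claim) showing at most a $\tau$ fraction of $x$ have $\gamma_x \ge 1/4$. For the claim itself we only need the hypothesis $\gamma_x < 1/4$ and may treat the rest as given. I would also note in passing that the resource bounds claimed in~\Cref{thm:HPPplus} follow immediately from plugging $N = 4M/\tau$ into the bounds of~\Cref{thm:HPP} (so $\log N = \Theta(\log\max\{n,1/\tau\})$ since $\tau < M/n$), and that the outer repetition of $A^*$ with the $9/16$-majority rule, combined with a Chernoff bound, boosts the $5/8$ success probability to $1 - \delta$ — but these are outside the claim and I would relegate them to the surrounding text rather than the proof of~\Cref{claim:a}.
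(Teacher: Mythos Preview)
Your proposal is correct and matches the paper's own proof essentially verbatim: the paper also conditions on $x\circ\bz$ being good (probability $>3/4$) and then on $A_{\text{average-case}}$ succeeding (probability $\ge 7/8$), concluding via $(3/4)(7/8)>5/8$. The only cosmetic difference is that you phrase the combination as a union bound on failures rather than a product of successes, which yields the same $5/8$ bound.
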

\begin{proof}
	The probability that $\bz \sim \zo^{N-n}$ such that $x \circ \bz$ is good is at least $3/4$.  If $x \circ \bz$ is good then with probability at least $1-\delta'=7/8$ the output of $A_{\text{average-case}}$ as run in Step~3 is the string $x \circ \bz$ and hence the output of $A^*$ is $x$. The claim follows since $(3/4) \cdot (7/8) > 5/8.$
\end{proof}

Hence for at least a $(1-\tau)$-fraction of $x \in \zo^n$, a run of $A^*\hspace{0.04cm}(\tau,n,\calC_{q,q'}(x))$ outputs $x$ with probability at least $5/8.$  For any such $x$, a simple Chernoff bound shows that with probability at least $1-\delta$, at least $9/16$ of the $O(\log(1/\delta))$ many independent runs of $A^*$ will output $x$. This concludes the proof of~\Cref{thm:HPPplus}.~\hfill $\blacksquare$


\section{The core clustering result} \label{sec:clustering}

In this section we state and prove the key clustering result that is used
  in the main algorithm.  Intuitively,~it gives an efficient procedure with the following performance guarantee:  Given two traces, the procedure can determine with high probability whether
 the two traces were both obtained as traces from the same uniform random string $\bx \sim \zo^n$, or the two traces were obtained from two independent uniform random strings $\bx^1,\bx^2 \sim \zo^n.$

In more detail, the main result of this section is the following theorem:

\begin{theorem} \label{thm:clustering}
Fix any constants $q,q' \in [0,1)$.
There is a deterministic algorithm $A_{\text{cluster}}$ with the following performance guarantee:
It is given a positive integer $n$ and a pair of binary strings $z$ and $z'$.
Let $\delta_{\text{cluster}} := \exp(-\Theta(n^{1/3}))$. 
Then $A_{\text{cluster}}\hspace{0.04cm}(n,z,z')$ runs in time $O(n)$ and satisfies
  the following two properties:
\begin{flushleft}
\begin{enumerate}
\item Suppose that $\bx$ is uniform random over $\zo^n$ and
  $\bz,\bz'$ are independent draws from ${\cal C}_{q,q'}(\bx)$. Then with probability at least $1 - \delta_{\text{cluster}}$, algorithm $A_{\text{cluster}}\hspace{0.04cm}
  (n,\bz,\bz')$ outputs ``same.''\vspace{-0.06cm}

\item Suppose that $\bx^1,\bx^2$ 
are independent uniform random strings over $\zo^n$,
    $\bz\sim {\cal C}_{q,q'}(\bx^1)$ and $\bz'\sim{\cal C}_{q,q'}(\bx^2)$. Then with probability at least $1-\delta_{\text{cluster}}$, $A_{\text{cluster}}\hspace{0.04cm}(n,\bz,\bz')$ outputs ``different.''\vspace{0.3cm}
\end{enumerate}\end{flushleft}
\end{theorem}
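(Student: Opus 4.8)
The plan is to make rigorous the intuition sketched in the introduction: partition the two traces into contiguous blocks, compute the signed sums $\ssum(z,i)$ and $\ssum(z',i)$ of the $\pm 1$ bits in each block, and distinguish the two cases by the typical magnitude of $\ssum(z,i)-\ssum(z',i)$. Concretely, I would set a block length $L = \Theta(n^{2/3})$ (so there are $\Theta(n^{1/3})$ blocks), and define $A_{\text{cluster}}$ to output ``same'' iff some appropriate statistic of $\{\ssum(z,i)-\ssum(z',i)\}_i$ — e.g. the number of blocks $i$ for which $|\ssum(z,i)-\ssum(z',i)|$ is below a threshold, or the average squared difference — falls on the ``same'' side of a cutoff. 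The $O(n)$ running time is immediate. The whole content is the probabilistic analysis of the two cases.

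\smallskip
\noindent\textbf{Case 2 (different sources).} Here $z,z'$ are independent, so conditioned on the source strings and the channel randomness, each $\ssum(z,i)$ and $\ssum(z',i)$ is a sum of independent uniform $\pm 1$ bits (the bits of $\bx^1,\bx^2$ are uniform and, crucially, the inserted bits are uniform too, and deletions only remove bits). Thus $\ssum(z,i)-\ssum(z',i)$ is a sum of roughly $2L'$ independent $\pm1$ variables where $L'$ is the block length after accounting for the net effect of the channel (deletions shrink, insertions grow, by constant factors), so it has standard deviation $\Theta(\sqrt L) = \Theta(n^{1/3})$. A Hoeffding/Chernoff bound shows $|\ssum(z,i)-\ssum(z',i)| \ge c\sqrt L$ with probability bounded away from $0$ for each block, and — since the blocks are disjoint and hence the differences are independent (given channel randomness) — a Chernoff bound over the $\Theta(n^{1/3})$ blocks gives that a constant fraction of blocks exceed the threshold except with probability $\exp(-\Theta(n^{1/3})) = \delta_{\text{cluster}}$.

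\smallskip
\noindent\textbf{Case 1 (same source), the main obstacle.} The difficulty is controlling the correlation between the $i$-th block of $z$ and the $i$-th block of $z'$. Ideally both blocks are generated from (nearly) the same window of coordinates of $\bx$, so the shared bits of $\bx$ cancel in $\ssum(z,i)-\ssum(z',i)$, leaving only the contribution of the inserted bits and of the mismatched portions at the block boundaries — a sum of $O(\sqrt L)$-ish (in fact $O(\sqrt{\text{drift}})$) independent $\pm1$'s, hence typically of magnitude much smaller than $c\sqrt L$. The technical crux is that the ``drift'' — the discrepancy between which source coordinate feeds position $j$ of $z$ versus position $j$ of $z'$ — is itself random and accumulates; after $j$ steps the drift is a sum of $j$ mean-zero bounded (once we center the geometric insertion variables) independent steps, so it is $O(\sqrt{j}) = O(\sqrt n)$ with high probability. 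Choosing $L = \Theta(n^{2/3})$ is exactly what makes the drift $O(n^{1/2})$ within each block negligible compared to the block length $n^{2/3}$: the number of ``misaligned'' coordinates in block $i$ is $O(\sqrt n + \text{(insertions in block }i))= O(n^{1/2})= o(L)$ with probability $1-\exp(-\Theta(n^{1/3}))$ (large-deviation bound on a sum of geometrics, plus a union bound over the $\Theta(n^{1/3})$ blocks). Conditioned on this good alignment event, $\ssum(z,i)-\ssum(z',i)$ is a centered sum of $O(n^{1/2})$ independent $\pm1$'s, hence $|\ssum(z,i)-\ssum(z',i)| = O(n^{1/4}\sqrt{\log(1/\delta_{\text{cluster}})})$... — actually $O(n^{1/4}\cdot n^{1/6}) = O(n^{5/12}) \ll \sqrt L = n^{1/3}$ fails, so more care is needed: one instead shows that \emph{most} blocks (say a $1-o(1)$ fraction) have small difference, which suffices to separate from Case 2 where a \emph{constant fraction} are large, and a Chernoff bound over blocks again drives the failure probability to $\exp(-\Theta(n^{1/3}))$. (Alternatively, use the average of $(\ssum(z,i)-\ssum(z',i))^2$ over blocks, whose expectation differs by a constant factor between the two cases, with concentration from independence across blocks.) I would carefully pick the block size and threshold so that the two regimes are separated by a constant factor, then invoke concentration over the $\Theta(n^{1/3})$ independent blocks in both directions; the bookkeeping for the drift and insertion large-deviation bounds is the part that requires genuine effort, while everything else is routine.
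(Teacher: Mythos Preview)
Your overall architecture (blocks of length $\Theta(n^{2/3})$, threshold on $|\ssum(z,\ell)-\ssum(z',\ell)|$, Chernoff over $\Theta(n^{1/3})$ blocks) matches the paper, but your analysis of Case~1 has a genuine error that invalidates the argument as written.

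You claim that in the same-source case the non-cancelling part of $\ssum(z,\ell)-\ssum(z',\ell)$ consists of ``the inserted bits and the mismatched portions at the block boundaries,'' and hence has only $O(\sqrt n)$ terms. This is wrong on two counts. First, even with perfectly aligned windows and \emph{no} insertions, the deletions in the two traces are independent: a source bit $x_i$ that survives into block~$\ell$ of $z$ need not survive into block~$\ell$ of $z'$. Only the bits surviving in \emph{both} traces cancel; those surviving in exactly one contribute, and there are $\Theta(t)$ of them (roughly a $2p(1-p)$ fraction of the window). Second, the inserted bits that survive into a block of length $t$ number $\Theta(t)$, not $O(\sqrt n)$. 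So in the same-source case $\ssum(z,\ell)-\ssum(z',\ell)$ is still a sum of $\Theta(t)=\Theta(n^{2/3})$ independent $\pm1$'s, and its typical magnitude is $\Theta(\sqrt t)=\Theta(n^{1/3})$---the \emph{same order} as in the different-source case. The separation is only by a constant factor in the number of terms: $2t$ versus at most $(1-\tau)\cdot 2t$ for some constant $\tau>0$ determined by $p,p'$.

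Consequently the ``most blocks small vs.\ a constant fraction large'' dichotomy you sketch cannot work: in both cases a constant fraction of blocks will have $|\ssum(z,\ell)-\ssum(z',\ell)|\ge c\sqrt t$. What the paper does instead is prove (via Berry--Esseen) that there exist constants $\beta,\gamma,\delta>0$ such that $\Pr[\,|S_m|\ge \beta\sqrt m\,]\ge \gamma+\delta$ for a sum $S_m$ of $m$ uniform $\pm1$'s, while $\Pr[\,|S_{m'}|\ge \beta\sqrt m\,]\le \gamma-\delta$ whenever $m'\le(1-\tau)m$. The algorithm then counts how many blocks exceed $\beta\sqrt{2t}$ and compares to $\gamma\tilde s$. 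Your parenthetical suggestion of averaging $(\ssum(z,\ell)-\ssum(z',\ell))^2$ is closer to viable, since its expectation does differ by a constant factor between the two cases, but you would still need the constant-factor overlap bound $|B_\ell(\rr)\cap B_\ell(\rr')|\ge \tau t$ and concentration across blocks.

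A second omission: you need gaps of length $t$ between consecutive blocks. Without them, a single source index $i$ can land in block~$\ell$ of $\rr$ and block~$\ell\pm 1$ of $\rr'$ (the $O(\sqrt n)$ drift is enough for this at block boundaries), so $\bx_i$ appears in both $\ssum(z,\ell)-\ssum(z',\ell)$ and $\ssum(z,\ell\pm1)-\ssum(z',\ell\pm1)$, destroying the across-block independence you invoke for the final Chernoff step. The paper inserts size-$t$ gaps precisely to rule this out (property~(2) of its key lemma).
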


\subsection{Proof of ~\Cref{thm:clustering}}


For convenience, we consider strings over $\{-1,1\}$ instead of $\{0,1\}$
in the rest of this section.
We need the following technical lemma:

\begin{lemma} \label{lem_bernoulli_deviation}
	Let $\tau\in (0,1]$ be a constant.
	Then there exist three positive constants $c_1$, $c_2$ and $c_3$
	(that~only depend on $\tau$) such that
	the following property holds.
	For all positive integers $m$ and $m'$ such that
	$m'\le (1-\tau) m$ and $m$ is sufficiently large,
	letting $\XX_1,\ldots,\XX_m$ be independent and
	uniform  random variables over $\{-1,1\}$, we have
	\[
		\Pr\Big[\big|\XX_1+\cdots+\XX_m\big|\ge c_1\sqrt{m}\Big]\ge c_2+c_3 \quad\text{and}\quad
		\Pr\Big[\big|\XX_1+\cdots+\XX_{m'}\big|\ge c_1\sqrt{m}\Big]\le c_2-c_3.
	\]
\end{lemma}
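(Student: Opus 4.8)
The statement is: for a constant $\tau \in (0,1]$, there exist positive constants $c_1, c_2, c_3$ (depending only on $\tau$) such that for all large $m$ and all $m' \le (1-\tau)m$:
- $\Pr[|X_1 + \dots + X_m| \ge c_1\sqrt{m}] \ge c_2 + c_3$
- $\Pr[|X_1 + \dots + X_{m'}| \ge c_1\sqrt{m}] \le c_2 - c_3$

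where $X_i$ are iid uniform $\pm 1$.

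**Approach: Use the CLT / Berry-Esseen.**

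Let $S_m = X_1 + \dots + X_m$. Then $S_m/\sqrt{m}$ converges in distribution to a standard Gaussian $Z$. For the truncated sum $S_{m'}$, we have $S_{m'}/\sqrt{m'}$ also converges to $Z$, but the threshold is $c_1\sqrt{m}$, which in units of $\sqrt{m'}$ is $c_1\sqrt{m/m'} \ge c_1/\sqrt{1-\tau}$.

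So define:
- $p(\theta) := \Pr[|Z| \ge \theta]$ — this is a continuous, strictly decreasing function of $\theta > 0$, taking all values in $(0,1)$.
- For the full sum: $\Pr[|S_m| \ge c_1\sqrt{m}] \to p(c_1)$.
- For the truncated sum: $\Pr[|S_{m'}| \ge c_1\sqrt{m}] = \Pr[|S_{m'}/\sqrt{m'}| \ge c_1\sqrt{m/m'}] \to p(c_1\sqrt{m/m'})$, and since $c_1\sqrt{m/m'} \ge c_1/\sqrt{1-\tau} > c_1$, we get this limit is $\le p(c_1/\sqrt{1-\tau}) < p(c_1)$.

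**Key steps:**

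1. **Choose $c_1$.** Any fixed positive constant works; e.g., $c_1 = 1$. (Its value only matters through $p(c_1)$ vs $p(c_1/\sqrt{1-\tau})$.)

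2. **Identify the target interval.** Let $a := p(c_1/\sqrt{1-\tau})$ and $b := p(c_1)$. Since $c_1/\sqrt{1-\tau} > c_1$ and $p$ is strictly decreasing, $0 < a < b < 1$. Set $c_2 := (a+b)/2$ and $c_3 := (b-a)/4$ (or $(b-a)/3$, just need $c_3 < (b-a)/2$), so that $c_2 + c_3 < b$ and $c_2 - c_3 > a$. All of these depend only on $\tau$.

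3. **Quantitative CLT (Berry-Esseen).** For the first inequality: $|\Pr[|S_m| \ge c_1\sqrt{m}] - p(c_1)| \le C/\sqrt{m}$ for an absolute constant $C$ (Berry-Esseen gives $\sup_t |\Pr[S_m/\sqrt{m} \le t] - \Phi(t)| \le C/\sqrt{m}$, and $p(\cdot)$ is expressed via two one-sided tails). Hence for $m$ large enough (larger than some $m_0(\tau)$), $\Pr[|S_m| \ge c_1\sqrt{m}] \ge p(c_1) - (b-a)/100 > c_2 + c_3$.

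4. **Truncated sum.** Similarly, $\Pr[|S_{m'}| \ge c_1\sqrt{m}] = \Pr[|S_{m'}| \ge (c_1\sqrt{m/m'})\sqrt{m'}]$. Writing $\theta' := c_1\sqrt{m/m'} \ge c_1/\sqrt{1-\tau}$, Berry-Esseen (applied to $S_{m'}$, valid once $m' \ge m_0'$, and we handle small $m'$ separately — see the obstacle below) gives this probability is $\le p(\theta') + C/\sqrt{m'} \le p(c_1/\sqrt{1-\tau}) + C/\sqrt{m'} = a + C/\sqrt{m'}$. Taking $m$ (hence ... well, not hence $m'$) large enough makes this $< c_2 - c_3$.

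**Main obstacle I anticipate:** The truncated sum bound requires $m'$ to be large for Berry-Esseen to apply, but $m'$ can be as small as $1$ (or even $0$, though the problem says "positive integers $m, m'$"). For small $m'$, $S_{m'}$ is a bounded random variable: $|S_{m'}| \le m'$, and if $m'$ is a fixed small constant while $m \to \infty$, then $c_1\sqrt{m} \gg m' \ge |S_{m'}|$, so $\Pr[|S_{m'}| \ge c_1\sqrt{m}] = 0 \le c_2 - c_3$ trivially. More carefully: whenever $m' < (c_1\sqrt{m})^2 = c_1^2 m$... no wait, we need $m' < c_1 \sqrt m$ roughly. Since $m' \le (1-\tau)m$, once $m \ge m_1(\tau)$ we have $c_1\sqrt{m} > (1-\tau)m$?? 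No — that's false for large $m$. So this only kills the $m'$ that are $O(\sqrt m)$. For the remaining range $\sqrt{m} \lesssim m' \le (1-\tau)m$, Berry-Esseen applies since $m' \to \infty$ with $m$. So I would split into two cases: (a) $m' \le c_1\sqrt{m}$ where the probability is exactly $0$; (b) $m' > c_1\sqrt{m}$ where $m' \to \infty$ as $m \to \infty$ so Berry-Esseen on $m'$ terms applies with error $C/\sqrt{m'} \le C/(c_1 m)^{1/4} \to 0$. This clean case split is the only subtlety; everything else is routine.

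# Proof proposal for Lemma \ref{lem_bernoulli_deviation}

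\textbf{Approach.} The plan is to use the Central Limit Theorem, in the quantitative form of the Berry--Esseen bound, together with the observation that the tail of a Gaussian is a continuous, strictly decreasing function of the threshold. The full sum $\XX_1 + \cdots + \XX_m$, normalized by $\sqrt m$, looks standard Gaussian; the truncated sum $\XX_1 + \cdots + \XX_{m'}$, normalized by $\sqrt{m'}$, also looks standard Gaussian, but it is being compared against the \emph{larger} threshold $c_1\sqrt m = c_1\sqrt{m/m'}\cdot\sqrt{m'} \ge (c_1/\sqrt{1-\tau})\cdot\sqrt{m'}$, so its tail is correspondingly smaller. We just need to pick the constants to exploit this gap.

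\textbf{Choosing the constants.} Let $p(\theta) := \Pr[|Z| \ge \theta]$ for $Z$ a standard Gaussian; $p$ is continuous and strictly decreasing on $(0,\infty)$ with range $(0,1)$. Fix $c_1 := 1$. Since $1/\sqrt{1-\tau} > 1$ (using $\tau \in (0,1]$; the case $\tau = 1$ is trivial since then $m' \le 0$ is impossible for positive integers, so assume $\tau < 1$), we have $a := p\!\left(1/\sqrt{1-\tau}\right) < b := p(1)$, with $0 < a < b < 1$. Set $c_2 := (a+b)/2$ and $c_3 := (b-a)/4$, so that $c_2 + c_3 = (3b+a)/4 < b$ and $c_2 - c_3 = (3a+b)/4 > a$; both lie strictly inside $(0,1)$ and depend only on $\tau$.

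\textbf{The full-sum bound.} By Berry--Esseen applied to $\XX_1 + \cdots + \XX_m$ (each $\XX_i$ has mean $0$, variance $1$, bounded third moment), there is an absolute constant $C$ with $\big|\Pr[|\XX_1+\cdots+\XX_m| \ge c_1\sqrt m] - p(c_1)\big| \le C/\sqrt m$, obtained by writing $p(c_1)$ as a sum of two one-sided tails and applying the Kolmogorov-distance bound to each. Hence for $m \ge m_0(\tau)$ large enough that $C/\sqrt m < (b-a)/4$, we get $\Pr[|\XX_1+\cdots+\XX_m| \ge c_1\sqrt m] \ge b - (b-a)/4 > c_2 + c_3$, which is the first inequality.

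\textbf{The truncated-sum bound — the main point.} Here I would split on the size of $m'$. If $m' \le c_1\sqrt m$, then $|\XX_1 + \cdots + \XX_{m'}| \le m' \le c_1\sqrt m$ with one of the inequalities strict unless $m' = c_1\sqrt m$ exactly; in any case $\Pr[|\XX_1+\cdots+\XX_{m'}| \ge c_1\sqrt m] = 0 \le c_2 - c_3$ once $m'< c_1\sqrt m$, and the boundary case $m' = c_1\sqrt m$ is handled by the generic argument below (or absorbed by taking $c_1$ slightly irrational, or noting the bounded-sum probability is still tiny). If instead $m' > c_1\sqrt m$, then $m' \to \infty$ as $m \to \infty$, so Berry--Esseen applies to $\XX_1 + \cdots + \XX_{m'}$ with error at most $C/\sqrt{m'} \le C/(c_1 m)^{1/4}$. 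Writing $\theta' := c_1\sqrt{m/m'} \ge c_1/\sqrt{1-\tau}$ (using $m' \le (1-\tau)m$) and using that $p$ is decreasing, we obtain
\[
\Pr\big[|\XX_1+\cdots+\XX_{m'}| \ge c_1\sqrt m\big] \;=\; \Pr\big[|\XX_1+\cdots+\XX_{m'}| \ge \theta'\sqrt{m'}\big] \;\le\; p(\theta') + \frac{C}{(c_1 m)^{1/4}} \;\le\; a + \frac{C}{(c_1 m)^{1/4}}.
\]
Taking $m$ large enough that $C/(c_1 m)^{1/4} < (b-a)/4$ makes the right side $< a + (b-a)/4 = c_2 - c_3$. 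Choosing $m_0(\tau)$ to be the maximum of the thresholds needed in the two bounds completes the proof.

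\textbf{Expected obstacle.} The only real subtlety is the range of $m'$ for which Berry--Esseen is meaningless (small $m'$), and recognizing that exactly in that range the comparison threshold $c_1\sqrt m$ outstrips the trivial bound $|S_{m'}| \le m'$, so the tail probability is literally $0$. Everything else is a routine quantitative-CLT calculation; no clever argument is needed beyond setting up the constants so the Gaussian-tail gap $b - a > 0$ dominates the $O(m^{-1/4})$ error terms.
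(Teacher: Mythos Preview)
Your proposal is correct and follows essentially the same approach as the paper: reduce both tail probabilities to Gaussian tails via Berry--Esseen, exploit the strict monotonicity of the Gaussian tail to create a gap between $p(c_1)$ and $p(c_1/\sqrt{1-\tau})$, and handle the small-$m'$ regime separately by the trivial bound $|S_{m'}|\le m'$. The only cosmetic difference is that the paper splits at $m' = m^{1/3}$ rather than at $m' = c_1\sqrt m$, and leaves the constants $c_1,c_2,c_3$ implicit rather than naming them as you do.
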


\begin{proof}
	The Berry-Esseen theorem (see e.g.~\cite{Feller}) establishes closeness between the cdf of a sum of ``well-behaved'' independent random variables (such as $\bX_1,\dots,\bX_m$)  and the cdf of a Normal distribution with the same mean and variance.  By the Berry-Esseen theorem, the probability of $|\XX_1+\cdots+\XX_{m}|\ge c_1\sqrt{m}$ is within an additive $\pm o_m(1)$ of the corresponding probability of $|\bG_1|\ge c_1\sqrt{m}$,
	where 
	$\bG_1 \sim {\cal N}(0,m)$. 

	We first consider the case that $m'$ is not too small compared to $m$, say $m' > m^{1/3}$. In this case the Berry-Esseen theorem implies that the probability of $|\XX_1+\cdots+\XX_{m'}|\ge c_1\sqrt{m}$ is also within an additive $\pm o_m(1)$ of the corresponding probability for Gaussian random variables, which is now $\Pr[\hspace{0.04cm} |\bG_2 |\ge c_1\sqrt{m}\hspace{0.04cm}]$ with $\bG_2 \sim {\cal N}(0,m')$. So in this case Lemma~\ref{lem_bernoulli_deviation} is an immediate consequence of an analogous statement for Gaussian random variables,
	\begin{equation}
		\Pr\Big[\big|\bG_1\big|\ge c_1\sqrt{m}\Big]\ge c_2+c_3 \quad\text{and}\quad
		\Pr\Big[\big|\bG_2\big|\ge c_1\sqrt{m}\Big]\le c_2-c_3, \label{eq:gaussians}
	\end{equation}
	where $\bG_1 \sim {\cal N}(0,m)$ and $\bG_2 \sim {\cal N}(0,m')$. The first probability  in (\ref{eq:gaussians}) is the probability that a Gaussian's magnitude exceeds its mean by at least $c_1$ standard deviations, while the second probability in (\ref{eq:gaussians}) is the probability that a Gaussian's magnitude exceeds its mean by at least $c_1/\sqrt{1-\tau}$ standard deviations.  Given this, for suitable $c_1,c_2,c_3$ depending only on $\tau$, the inequalities (\ref{eq:gaussians}) are a straightforward consequence of the following standard bounds on the cdf of a Gaussian $\bG\sim \calN(0,\sigma^2)$
	[\cite{Feller}, Section 7.1]:
	$$
	\left(\frac{1}{x}-\frac{1}{x^3}\right)\cdot \frac{e^{-x^2/2}}{\sqrt{2\pi}}\le 
	\Pr\big[\bG\ge x\sigma\big]\le \frac{1}{x}\cdot \frac{e^{-x^2/2}}{\sqrt{2\pi}},\quad
	\text{for all $x>0$.}
	$$
	
	Finally we consider the case that $m'$ is very small compared to $m$, say $m' \leq m^{1/3}$.  In this case, by~the Berry-Esseen theorem we have that $\smash{\Pr[\hspace{0.04cm} |\XX_1+\cdots+\XX_{m}|\ge c_1 \sqrt{m}\hspace{0.04cm}]}$ is $\pm o_m(1)$-close to the probability that a Gaussian's magnitude exceeds its mean by $c_1$ standard deviations, which is at least some absolute constant, while 
	$\smash{\Pr[\hspace{0.04cm} |\XX_1+\cdots+\XX_{m'}|\ge c_1 \sqrt{m}\hspace{0.04cm}]}$ is zero for sufficiently large $m$, because $\smash{m' \leq m^{1/3} < c_1 \sqrt{m}}$ for $m$ sufficiently large. This finishes the proof of Lemma \ref{lem_bernoulli_deviation}.
\end{proof}

Recall that constants $q,q' \in [0,1)$ denote the deletion probability and insertion probability respectively. Let $p,p'\in (0,1]$ be $p=1-q$ and $p'=1-q'$.
Then the expected length of a string drawn from $\calC_{q,q'}(x)$ with
$x\in \{-1,1\}^n$
is $np/p'=\alpha n$, where $\alpha:=p/p'$ is a positive constant.

Given a string $x\in \{-1,1\}^n$,
we start by describing an equivalent way of drawing
$\bz\sim \calC_{q,q'}(x)$. 
We say a string $r$ over $[n]\cup \{*\}$
is an $n$-\emph{pattern} if every $i\in [n]$ appears
in $r$ at most once and integers appear~in~$r$ in ascending order.
We write $\calR_{n,q,q'}$ to denote the following distribution
over $n$-patterns.
To draw $\rr\sim \calR_{n,q,q'}$ we start with $r^{(0)}=(1,2,\ldots,n)$.
Then for each $j\in [n]$, $\bG_j(q')-1$ many $*$'s are inserted
before the $j$-th entry (with value $j$) of $r^{(0)}$ to obtain $\rr^{(1)}$.
Finally each entry of $\rr^{(1)}$ is independently deleted with probability $q$
to obtain the final string $\rr$.
Using $\calR_{n,q,q'}$, drawing $\bz\sim \calC_{q,q'}(x)$
can be done equivalently as follows:
\begin{enumerate}
	\item Draw an $n$-pattern $\rr\sim\calR_{n,q,q'}$.\vspace{-0.12cm}
	\item For each index $i\in [n]$ that appears in $\rr$, replace it by $x_i$ in $\rr$.\vspace{-0.12cm}
	\item Replace each $*$ in $\rr$ with an independent and uniform draw from $\{-1,1\}$.
\end{enumerate}

Next we introduce a number of parameters and constants that will
be used in the
clustering algorithm $A_{\text{cluster}}$. 
Two parameters $\tilde{s}$ and $t$ used in the algorithm are
\[
t:=n^{2/3}\quad \text{and}\quad \tilde{s}=\left\lfloor\frac{\alpha n}{4t}\right\rfloor=\Theta(n^{1/3})
\]
so that $2\tilde{s}t\le \alpha n /2$.
Three constants $\beta,\gamma$ and $\delta$ are defined
as $c_1,c_2$ and $c_3$ in Lemma \ref{lem_bernoulli_deviation} with
$\tau$ set to be the following constant in $(0,1)$: 
$
	\tau= 0.7\hspace{0.02cm}p\hspace{0.02cm}p'$.
For each $\ell\in [\tilde{s}]$, we let $I_\ell$ denote the following set of integers:
\begin{equation} \label{eq:Iell}
I_\ell=\big[(2\ell-2)t+1, (2\ell-1)t\big] \cap \Z.
\end{equation}
Given a string $z$ over $\{-1,1\}$ (or an $n$-pattern $r$),
we will refer to entries $z_i$ of $z$ (or $r_i$ of $r$) over $i\in I_\ell$
as the $\ell$-th \emph{block} of $z$ (or $r$).
So each block consists of $t$ entries and
two consecutive blocks are separated by a gap of $t$ entries.
Given an $n$-pattern $r$ and an integer $\ell$, we write
$B_\ell(r)\subset [n]$ to denote the set of $i\in [n]$ that appears in
the $\ell$-th block of $r$.

The algorithm $\smash{A_{\text{cluster}}}$
is described in Figure~\ref{fig:clustering}.
Before stating the key technical~lemma (Lemma \ref{lem:keytechnical})
and using it to prove Theorem \ref{thm:clustering}, we give some intuition for the algorithm
$A_{\text{cluster}}$.

\begin{figure}[h]
	\begin{framed}
		\begin{minipage}{42em}
			\begin{flushleft}
				\noindent \textbf{Algorithm} $A_{\text{cluster}}\hspace{0.04cm}(n,z,z')$\\
				\noindent {\bf Input:} A positive integer $n$ and two strings $z$ and $z'$ over $\{-1,1\}$.

				{\bf Output:}  ``Same'' or ``different.'' \vspace{0.0cm}
				\begin{enumerate}
					\item For each $\ell\in [\tilde{s}]$, set $Z_\ell$
					to be the sum of $z_i$ over $i\in I_\ell$, with $z_i=0$ when $i>|z|$.\vspace{-0.05cm}
					\item For each $\ell\in [\tilde{s}]$, set $Z_\ell'$
					to be the sum of $z_i'$ over $i\in I_\ell$, with $z_i'=0$ when $i>|z'|$.\vspace{-0.05cm}
					\item Count the number of $\ell\in [\tilde{s}]$ such that
					$\big|Z_\ell-Z_\ell'\big|\ge \beta \sqrt{2t}$.\vspace{-0.05cm}
					\item If the number of such $\ell$ is at least $\gamma \tilde{s}$, return ``different;''
					otherwise, return ``same.''  \vspace{-0.2cm}

				\end{enumerate}
			\end{flushleft}
		\end{minipage}\vspace{0.25cm}
	\end{framed}\vspace{-0.2cm}
	\caption{Description of the clustering algorithm $A_{\text{cluster}}$.} \label{fig:clustering}
\end{figure}

Recall the two cases in Theorem \ref{thm:clustering}.
We start with the easier second case, where
$\bx^1$ and $\bx^2$ are drawn from $\{-1,1\}^n$ uniformly and independently,
$\bz\sim \calC_{q,q'}(\bx^1)$
and $\bz'\sim \calC_{q,q'}(\bx^2)$.
First it is easy to show (see property (0) of Lemma \ref{lem:keytechnical}) that
$|\bz|,|\bz'|\ge \alpha n/2$ with very high probability.
When this happens,
$\bZ_\ell$ is the sum of $t$ independent and uniform random variables over $\{-1,1\}$
and the same holds for $\bZ_\ell'$. Moreover, $\bZ_\ell$ and $\bZ_\ell'$ are independent of each other since $\bx^1$ and $\bx^2$ are drawn independently and thus,
$\bZ_\ell-\bZ_\ell'$ can be equivalently
written as the sum of $2t$ independent and uniform variables over $\{-1,1\}$.
Furthermore, the $\tilde{s}$ random variables $\bZ_\ell-\bZ_\ell'$ over $\ell\in [\tilde{s}]$
are independent.
Thus, it follows from Lemma \ref{lem_bernoulli_deviation} and our choices of $\beta,\gamma$ and $\delta$
that the probability of each $|\bZ_\ell-\bZ_\ell'|\ge \beta \sqrt{2t}$
is at least
$\gamma+\delta$ and with very high probability,
the number of such $\ell\in [\tilde{s}]$ is at least $\gamma \tilde{s}$,
in which case the algorithm returns ``different'' as desired.

In the first case of Theorem \ref{thm:clustering}, we draw $\bx$ from $\{-1,1\}^n$
  uniformly and
then draw $\bz,\bz'$ from $\calC_{q,q'}(\bx)$ independently.
Equivalently one can view the process as first drawing two $n$-patterns
$\rr$ and $\rr'$ independently from $\calR_{n,q,q'}$
and $\bx$ from $\{-1,1\}^n$.
The string $\bz$ (or $\bz'$) is then obtained by replacing each $i\in [n]$
in $\rr$ (or $\rr'$) by $\bx_i$ and each $*$ by an independent draw from $\{-1,1\}$.
Again we assume that $|\rr|,|\rr'|\ge \alpha n/2$, which happens with high probability.
When this is the case,
each of $\bZ_\ell$ and $\bZ_\ell'$ for $\ell\in [\tilde{s}]$ remains the sum
of $t$ independent 
uniform random variables over $\{-1,1\}$.
However, when an index $i\in [n]$ appears in the $\ell$-th block
of both $\rr$, $\rr'$, then $\bx_i$ appears in 
both sums $\bZ_\ell, \bZ'_\ell$
and gets cancelled out in their difference $\bZ_\ell-\bZ_\ell'$.

Our main technical lemma shows that with very high probability
over draws $\rr$ and $\rr'$ from $\calR_{n,q,q'}$, the
following two properties hold: (1) $|B_\ell(\rr)\cap B_\ell(\rr')|\ge \tau t$ for
every $\ell\in [\tilde{s}]$, i.e., there are at least $\tau t$ many integers
that appear in the $\ell$-th block of both $\rr$ and $\rr'$; and (2)
No index $i\in [n]$ appears in two different blocks of $\rr$ and $\rr'$
(i.e., it cannot be the case that both $i\in B_\ell(\rr)$ and $i\in B_{\ell'}(\rr')$
with $\ell\ne \ell'$; intuitively the reason why we leave
a gap of $t$ entries between two consecutive blocks is to achieve this property).
Fixing such a pair of $n$-patterns $r$ and $r'$,
property (1) implies that each $\bZ_\ell-\bZ_\ell'$ can be written
as the sum of at most $(1-\tau)2t$ many independent $\{-1,1\}$-variables; given this,
it follows directly
from Lemma \ref{lem_bernoulli_deviation} that the probability of
each $|\bZ_\ell-\bZ_\ell'|
\ge \beta\sqrt{2t}$ is at most $\gamma -\delta$.
Furthermore (2) implies that
the $\tilde{s}$ variables $\bZ_\ell-\bZ_\ell'$ 
over $\ell\in [\tilde{s}]$ are
independent. This lets us easily infer that the number of $\ell$ such that
$|\bZ_\ell-\bZ_\ell'|\ge \beta\sqrt{2t}$ is less than $\gamma \tilde{s}$
with very high probability, in which case the algorithm returns ``same'' as desired.

As discussed above, the main technical lemma we require is as follows:

\begin{lemma}\label{lem:keytechnical}
	Let $\rr,\rr'$ be two $n$-patterns drawn independently from $\calR_{n,q,q'}$.
	Then with probability at least $1-\exp(-\Omega(n^{1/3}))$, the
	following three properties all hold:
	\begin{flushleft}\begin{enumerate}
			\item[\emph{(0):}] $|\rr|,|\rr'|\ge \alpha n/2$.\vspace{-0.08cm}
			\item[\emph{(1):}] $\big|B_\ell(\rr)\cap B_\ell(\rr')\big|\ge \tau t$ \hspace{0.05cm}for all $\ell\in [\tilde{s}]$.\vspace{-0.08cm}
			\item[\emph{(2):}] If an $i\in [n]$ appears in both
			$B_\ell(\rr)$ and $B_{\ell'}(\rr')$ for some $\ell,\ell'\in [\tilde{s}]$,
			then we have $\ell=\ell'$.
	\end{enumerate}\end{flushleft}
\end{lemma}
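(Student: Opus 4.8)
The plan is to analyze the positions of the source indices in the two patterns $\rr$ and $\rr'$ via the natural ``clock'' describing where index $j \in [n]$ lands after the insertion/deletion process, and then union-bound over the three events. Recall from the equivalent description of $\calR_{n,q,q'}$ that in constructing $\rr^{(1)}$ we prepend $\bG_j(q')-1$ stars before the entry $j$, and then delete each of the resulting entries independently with probability $q$. Let me introduce, for each $j \in [n]$, the random variable $\bP_j$ equal to the position in the final string $\rr$ of the surviving copy of index $j$ (undefined if $j$ is deleted). The key structural observation is that $\bP_j$ is a sum of $j$ independent contributions: the number of entries among the first ``$j$-th original block'' (namely the $j$-th original symbol together with its preceding stars) that survive the $q$-deletion, where the number of entries in the $i$-th such block is $\bG_i(q')$. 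Thus $\bP_j$ (conditioned on $j$ surviving) is distributed as $1 + \sum_{i<j} \bH_i$ where each $\bH_i$ is the sum of $\bG_i(q')$ i.i.d.\ Bernoulli$(p)$ trials, so $\E[\bH_i] = p/p' = \alpha$ and $\bH_i$ has an exponential tail. This makes $\bP_j$ a sum of $j-1$ i.i.d.\ light-tailed variables with mean $\alpha(j-1)$ plus $1$, so standard Chernoff/Bernstein bounds give that $|\bP_j - \alpha j| \le O(\sqrt{n \log n})$, and in fact $|\bP_j - \alpha j| \le o(t) = o(n^{2/3})$, with probability $1 - \exp(-\Omega(n^{1/3}))$ simultaneously for all $j \in [n]$ (union-bounding over $n$ values of $j$, since $n \cdot \exp(-\Omega(n^{1/3})) = \exp(-\Omega(n^{1/3}))$); use the deviation bound at scale $n^{1/3 + c}$ for small $c>0$ to be safe. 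The same holds for $\rr'$ with independent copies $\bP'_j$.

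\medskip

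Given this ``position concentration'' event, I would verify the three properties. Property~(0): $|\rr|$ is itself a sum of $n$ i.i.d.\ copies of $\bH_i$ with mean $\alpha n$, so by the same Chernoff bound $|\rr| \ge \alpha n/2$ (in fact $\ge (1-o(1))\alpha n$) except with probability $\exp(-\Omega(n))$. Property~(2): suppose $i \in B_\ell(\rr) \cap B_{\ell'}(\rr')$ with $\ell \ne \ell'$. Then $\bP_i \in I_\ell$ and $\bP'_i \in I_{\ell'}$, so $|\bP_i - \bP'_i| \ge t$ (the blocks $I_\ell$ are separated by gaps of width $t$). But on our event both $\bP_i$ and $\bP'_i$ are within $o(t)$ of $\alpha i$, so $|\bP_i - \bP'_i| = o(t) < t$, a contradiction. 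Hence (2) holds deterministically on the good event. Property~(1): fix $\ell$; I want many indices $i$ with both $\bP_i \in I_\ell$ and $\bP'_i \in I_\ell$. Since $\bP_i \approx \alpha i$ and $|I_\ell| = t$, the set of $i$ with $\bP_i \in I_\ell$ is (on the good event) an interval of integers of length roughly $t/\alpha$ around $i \approx (2\ell-1)t/\alpha$; similarly for $\bP'_i$. Two intervals of length $\approx t/\alpha$ centered within $o(t)$ of the same point overlap in $\ge (1-o(1)) t/\alpha$ indices. Since $1/\alpha = p'/p$ and $\tau = 0.7 p p' < p p' \le p'/p$... wait — I should be careful here: we need $|B_\ell(\rr) \cap B_\ell(\rr')| \ge \tau t$ with $\tau = 0.7 p p'$, and $t/\alpha = t p'/p$, so we need $(1-o(1)) t p'/p \ge 0.7 p p' t$, i.e.\ $(1-o(1)) \ge 0.7 p^2$, which holds since $p \le 1$. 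So (1) follows, uniformly over $\ell \in [\tilde s]$, on the good event.

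\medskip

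The main obstacle I expect is making the ``position concentration'' step fully rigorous with the right tail: one must handle the fact that $\bP_j$ is defined only conditionally on $j$ surviving, and more importantly one must control \emph{all} $n$ indices simultaneously at deviation scale $o(t)$. This forces the per-index failure probability to be $\exp(-\Omega(n^{1/3}))$ rather than just $\exp(-\Omega(n))$ — which is why the final lemma only claims $1 - \exp(-\Omega(n^{1/3}))$. Concretely, $\bP_j - \alpha j$ is a centered sum of at most $n$ i.i.d.\ variables with exponential tails, so a Bernstein-type bound gives $\Pr[|\bP_j - \alpha j| \ge \lambda] \le \exp(-\Omega(\min(\lambda^2/n, \lambda)))$; taking $\lambda = n^{1/3}\log n$ (which is $o(t) = o(n^{2/3})$) yields failure probability $\exp(-\Omega(n^{2/3} \log^2 n / n)) = \exp(-\Omega(\log^2 n))$ — too weak. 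So instead one should take $\lambda = n^{1/2}$, giving failure probability $\exp(-\Omega(n^{1/2}))$, well below $\exp(-\Omega(n^{1/3}))$ after the union bound over $j$, and note $n^{1/2} = o(n^{2/3}) = o(t)$, which is all we need for properties (1) and (2). A secondary, purely bookkeeping obstacle is converting statements about positions $\bP_j$ into statements about the sets $B_\ell(\rr)$ (which index the \emph{surviving} entries only); here one also uses that the good event implies almost all indices survive and that the surviving indices, ordered by position, are monotone in $j$, so $\{i : \bP_i \in I_\ell\}$ really is a contiguous block of integers up to the $o(t)$ slack, as claimed.
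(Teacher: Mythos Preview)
Your overall strategy---concentrate the position of each source index $j$ around its mean $\alpha j$, then read off (0), (1), (2)---is exactly the paper's approach (the paper works with the unconditional variable $\bY_i$ = number of survivors before $i$, which sidesteps your ``conditioned on surviving'' issue). But the execution contains two genuine errors.

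First, the concentration arithmetic is wrong. With $\lambda=n^{1/2}$ your own Bernstein bound $\exp\bigl(-\Omega(\min(\lambda^2/n,\lambda))\bigr)$ gives $\min(1,n^{1/2})=1$, hence only $\exp(-\Omega(1))$, not $\exp(-\Omega(n^{1/2}))$. (Your earlier attempt $\lambda=n^{1/3}\log n$ is worse: $\lambda^2/n=n^{-1/3}\log^2 n\to 0$.) Since $\mathrm{Var}(\bP_j)=\Theta(j)$ and the relevant $j$ are $\Theta(n)$, no choice $\lambda=o(t)$ can work. The correct move is to take $\lambda=ct$ for a small \emph{constant} $c$ (the paper uses $c=0.05$): then $\lambda^2/n=\Theta(n^{1/3})$, the per-index failure probability is $\exp(-\Omega(n^{1/3}))$, and this survives the union bound over $i\in[n]$---matching the probability stated in the lemma. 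Constant-fraction-of-$t$ slack (not $o(t)$) is all that properties (1) and (2) actually need.

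Second, your Property~(1) count forgets deletions. The set $\{i:\bP_i\in I_\ell\}=B_\ell(\rr)$ consists only of \emph{surviving} indices; it is not ``an interval of integers of length roughly $t/\alpha$'' but a $p$-dense subset of such an interval, and your assertion that ``the good event implies almost all indices survive'' is simply false for a constant deletion rate $q>0$. Hence $|B_\ell(\rr)\cap B_\ell(\rr')|$ is about $p^2\cdot t/\alpha=pp'\,t$, not $(1-o(1))\,t/\alpha$, and your check $(1-o(1))\ge 0.7p^2$ is off by exactly this $p^2$ factor. The paper repairs this by fixing a source-index interval $I_\ell^{(0)}$ of length $0.8\,t/\alpha$, using the $0.05t$-level position concentration to place every survivor of $I_\ell^{(0)}$ inside $I_\ell$ in both patterns, and then applying a separate Chernoff bound to the i.i.d.\ Bernoulli$(p^2)$ events ``$i$ survives in both $\rr$ and $\rr'$'' to conclude that at least $0.7pp'\,t=\tau t$ of them do. With these two fixes your plan becomes the paper's proof.
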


The detailed proof of Lemma \ref{lem:keytechnical} is given in~\Cref{sec:keytechnical}; here we give some intuition.

To prove Lemma \ref{lem:keytechnical}, we show
  that $\rr,\rr'\sim \calR_{n,q,q'}$ satisfy each of the three properties with probability
  at least $1-\exp(-\Omega(n^{1/3}))$; the lemma follows from a union bound.
Property (0) follows from tail bounds on sums of independent Geometric random variables and from standard Chernoff bounds (see
  Claim \ref{clm_insertion}) for insertions and deletions,
  respectively.
Indeed property (0) holds with probability $1-\exp(-\Omega(n))$.

Properties (1) and (2) follow from Lemma
  \ref{lem_coordinate_deviation_t} in \Cref{sec:keytechnical}.
To state the lemma, recall that we write $\rr^{(1)}$ to denote the string
  over $[n]\cup \{*\}$ obtained after insertions during the
  generation of $\rr\sim \calR_{n,q,q'}$.
For each $i\in [n]$, we use $\bY_i$ to denote the number of characters
  before $i$ in $\rr^{(1)}$ that survive deletions;
  note that $\bY_i$ is the number of characters that appear before $i$
  in $\rr$ if $i$ survives in $\rr$, but $\bY_i$ is well defined even if $i$ was deleted.
By definition, we have $\smash{\bE [\bY_i ]=((i/p')-1)p}$.
Lemma \ref{lem_coordinate_deviation_t} shows that for constant $c \in (0,1)$, with
  probability $1-\exp(-\Omega(n^{1/3}))$,
  $|\bY_i-\bE[\bY_i]|\le ct$. 
Lemma \ref{lem_coordinate_deviation_t} again follows from tail bounds on sums of independent Geometric random variables and from standard Chernoff bounds.
We define $\bY_i'$ similarly for $\rr'$ and the same statement also holds for $\bY_i'$.

Property (2) follows directly from Lemma \ref{lem_coordinate_deviation_t},
  since for an $i\in [n]$
  to appear in two different~blocks, it must be the case that
  $|\bY_i-\bY_i'|\ge t$ and thus, either $|\bY_i-\bE[\bY_i]|\ge t/2$
  or $|\bY_i'-\bE[\bY_i']|\ge t/2$ (as we have $\bE [\bY'_i] = \bE [\bY_i]$), which happens with probability
  at most $\exp(-\Omega(n^{1/3}))$ by Lemma \ref{lem_coordinate_deviation_t}.

To prove property (1) for $\ell\in [\tilde{s}]$, we focus on the
  following interval of indices in $[n]$:
\[
I_\ell^{(0)} := \left[\frac{p'}{p}(2\ell-1.9)t+1,\frac{p'}{p}(2\ell-1.1)t \right] \cap \Z, 
\]
and show that with probability at least $1-\exp(-\Omega(n^{1/3}))$, we have both
\begin{enumerate}[label=(\alph*)]
	\item At least $\tau t=0.7t p p'$ indices in~$I_\ell^{(0)}$ survive in both $\rr$ and $\rr'$; and \vspace{-0.07cm}
	\item Every $\smash{i \in I_\ell^{(0)}}$ that survives in both $\rr$ and $\rr'$ lies
	in both $B_\ell(\rr)$ and $B_\ell(\rr')$.
\end{enumerate}
Item (a) follows from a Chernoff bound: the length of $\smash{I_\ell^{(0)}}$
  is $0.8tp'/p$ and every element survives independently in both strings
  with probability $p^2$.
Letting $i_0,i_1$ be the left and right ends of $\smash{I_\ell^{(0)}}$,
  item (b) holds when $\bY_{i_0},\bY_{i_0}',\bY_{i_1},\bY_{i_1}'$ do not
  shift too far ($0.1t$) away from their expectations which happens with probability
  at least $\smash{1-\exp(-\Omega(n^{1/3}))}$ by Lemma \ref{lem_coordinate_deviation_t}.
\medskip

Finally we use Lemma \ref{lem:keytechnical} to prove Theorem \ref{thm:clustering}:\medskip

\begin{proof}[Proof of~\Cref{thm:clustering}]
We start with the second case in which
  $\bx^1,\bx^2$ are independent uniform random strings over $\zo^n$, $\bz \sim {\cal C}_{q,q'}(\bx^1)$ and $\bz' \sim {\cal C}_{q,q'}(\bx^2)$.
By our discussion earlier, $\bz$ and $\bz'$ can be generated equivalently by first
  drawing $\br,\br'\sim \calR_{n,q,q'}$, then drawing $\bx^1,\bx^2$, and finally deriving
  $\bz$ (or $\bz'$) from $\br$~(or $\br'$) using $\bx^1$ (or $\bx^2$) as well as
  independent random bits for the $*$'s.
By Lemma \ref{lem:keytechnical}, $\br$ and $\br'$ satisfy all three properties
  with probability at least $1-\exp(-\Omega(n^{1/3}))$.
Fixing $r$ and $r'$ that satisfy all three~properties~(for the first case
  we only need property (0)),
  we show that $A_{\text{cluster}}\hspace{0.04cm}(n,\bz,\bz')$ returns ``different'' with probability
  at least $1-\exp(-\Omega(n^{1/3}))$ conditioning
  on $\br=r$ and $\br'=r'$; the lemma for this case then follows.

To this end, it follows from property (0) that each $\bZ_\ell-\bZ'_\ell$
  is a sum of $2t$ independent uniform random variables over $\{-1,1\}$ and thus,
  each $\ell\in [\tilde{s}]$ satisfies $|\bZ_\ell-\bZ'_\ell|\ge \beta\sqrt{2t}$
  with probability at least $\gamma+\delta$.
Moreover, the $\tilde{s}$ variables $\bZ_\ell-\bZ'_\ell$ are independent.
It follows from a Chernoff bound (and that $\delta$ is a positive constant)
  that $A_{\text{cluster}}$ returns ``different'' with probability $1-\exp(-\Omega(\tilde{s}))
  =1-\exp(-\Omega(n^{1/3}))$.

For the second case we can similarly generate $\bz,\bz'$ by first drawing
  $\rr,\rr'\sim\calR_{n,q,q'}$, then drawing~$\bx$,~and finally
  deriving $\bz,\bz'$ from $\rr,\rr'$ using the same $\bx$ and
  independent random bits for the $*$'s.
Again it follows from Lemma \ref{lem:keytechnical} that $\rr,\rr'$ satisfy
  all three properties with probability $1-\exp(-\Omega(n^{1/3}))$.
Fixing $r,r'$ that satisfy all three properties, we show that $A_{\text{cluster}}
  \hspace{0.04cm}(n,\bz,\bz')$ returns ``same'' with probability
  $1-\exp(-\Omega(n^{1/3}))$, conditioning on $\br=r$ and $\br'=r'$;
  the lemma for this case then follows.

For this purpose, properties (0) and (1) imply that
  each $\bZ_\ell-\bZ_\ell'$ is the sum of at most $(1-\tau)2t$~many independent uniform random variables over $\{-1,1\}$.
Lemma \ref{lem_bernoulli_deviation} implies that the probability
  of $|\bZ_\ell-\bZ_\ell'|\ge$ $\beta\sqrt{2t}$ is at most $\gamma-\delta$.
Moreover, property (2) implies that these $\tilde{s}$ variables $\bZ_\ell-\bZ_\ell'$
  are independent.
It similarly follows from a Chernoff bound that $A_{\text{cluster}}$
  returns ``same'' with probability $1-\exp(-\Omega(\tilde{s}))=1-\exp(-\Omega(n^{1/3}))$.
\end{proof}



\section{Putting the pieces together:  Proof of~\Cref{thm:main}} \label{sec:main-result}

In this section we combine the main results from earlier sections,~\Cref{thm:HPPplus} from~\Cref{sec:reduction} and~\Cref{thm:clustering} from~\Cref{sec:clustering}, together with standard results on learning discrete distributions, to prove~\Cref{thm:main}.

\subsection{Learning discrete distributions}

We recall the following folklore result on learning a discrete distribution from independent samples:

\begin{theorem} \label{thm:learn-discrete}
Fix $\gamma, \kappa > 0, N \in \N$. Let ${\cal P}$ be an unknown probability distribution over the discrete set $\{1,\dots,N\}$, and let $\bS = \{\bi_1,\dots,\bi_m\}$ be independent draws from ${\cal P}$, where $\smash{m=O ({ (N/ {\kappa^2})} \cdot \log(1/\gamma) )}.$ Let ${\hat{\cal P}}_{\bS}$ denote the empirical probability distribution over $[N]$ corresponding to $\bS$.  Then 
with probability at least $1-\gamma$ over the draw of $\bS$, the variation distance $\dtv(\hat{\cal P}_{\bS},{\cal P})$ is at most $\kappa.$
\end{theorem}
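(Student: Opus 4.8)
The plan is the textbook ``bound the expectation, then concentrate'' argument. Write $p_j := {\cal P}(j)$ for $j \in [N]$, and let $\hat p_j := \hat{\cal P}_{\bS}(j) = \frac1m\,|\{k \in [m] : \bi_k = j\}|$, so that $m\hat p_j$ is a $\Bin(m,p_j)$ random variable and $\dtv(\hat{\cal P}_{\bS},{\cal P}) = \tfrac12 \sum_{j=1}^{N} |\hat p_j - p_j|$. I would first control $\bE[\dtv(\hat{\cal P}_{\bS},{\cal P})]$, and then show that $\dtv(\hat{\cal P}_{\bS},{\cal P})$ is tightly concentrated around its mean.

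For the expectation, Jensen's inequality together with the variance of a binomial gives $\bE[\,|\hat p_j - p_j|\,] \le \sqrt{\bE[(\hat p_j - p_j)^2]} = \sqrt{p_j(1-p_j)/m} \le \sqrt{p_j/m}$ for each $j$. Summing over $j$ and applying Cauchy--Schwarz together with $\sum_j p_j = 1$ yields $\bE\big[\sum_j |\hat p_j - p_j|\big] \le \sum_j \sqrt{p_j/m} \le \sqrt{N/m}$, i.e.\ $\bE[\dtv(\hat{\cal P}_{\bS},{\cal P})] \le \tfrac12\sqrt{N/m}$. Taking the hidden constant in $m = O((N/\kappa^2)\log(1/\gamma))$ large enough that $m \ge N/\kappa^2$ then forces this to be at most $\kappa/2$.

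For the concentration step, I would view $\dtv(\hat{\cal P}_{\bS},{\cal P})$ as a function of the i.i.d.\ sample $(\bi_1,\dots,\bi_m)$ and apply McDiarmid's bounded-differences inequality: replacing a single $\bi_k$ alters exactly two of the $\hat p_j$'s, each by $\pm 1/m$, so it changes $\sum_j|\hat p_j - p_j|$ by at most $2/m$ and hence changes $\dtv(\hat{\cal P}_{\bS},{\cal P})$ by at most $1/m$. McDiarmid then gives $\Pr\big[\dtv(\hat{\cal P}_{\bS},{\cal P}) \ge \bE[\dtv(\hat{\cal P}_{\bS},{\cal P})] + \kappa/2\big] \le \exp(-\kappa^2 m/2)$, which is at most $\gamma$ once $m \ge (2/\kappa^2)\ln(1/\gamma)$. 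Both conditions on $m$ hold for $m = O((N/\kappa^2)\log(1/\gamma))$, and then with probability at least $1-\gamma$ we conclude $\dtv(\hat{\cal P}_{\bS},{\cal P}) \le \kappa/2 + \kappa/2 = \kappa$.

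I do not expect any genuine obstacle — this is a folklore fact — but the one place to be slightly careful is obtaining $\sqrt{N}$ rather than $N$ in the expectation bound (via Cauchy--Schwarz), since this is exactly what makes $m$ scale linearly rather than quadratically in $N$. As an alternative to the McDiarmid step, one could instead write $\dtv(\hat{\cal P}_{\bS},{\cal P}) = \max_{A\subseteq[N]} |\hat{\cal P}_{\bS}(A)-{\cal P}(A)|$, apply a Hoeffding bound to each fixed $A$, and union-bound over the $2^N$ subsets $A$, which gives the (also sufficient) bound $m = O((N + \log(1/\gamma))/\kappa^2)$.
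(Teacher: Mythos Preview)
Your argument is correct and is one of the standard proofs of this folklore fact. Note, however, that the paper does not actually prove \Cref{thm:learn-discrete}: it is simply stated as a ``folklore result'' and immediately used to derive \Cref{cor:learn-discrete}, so there is no paper proof to compare against. Your Jensen/Cauchy--Schwarz bound on the expectation followed by McDiarmid is exactly the kind of self-contained justification one would supply if asked to fill in the details; the alternative you mention (Hoeffding on each $A\subseteq[N]$ plus a union bound over $2^N$ sets) is the other common route and, as you note, also lands within the stated sample complexity.
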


We will need a corollary which says that removing low-frequency elements has only a negligible effect:

\begin{corollary} \label{cor:learn-discrete}
Let ${\cal P},m$ and $\bS$ be as above. Let $\bS'$ be the subset of $\bS$ obtained by removing each \mbox{element} $j$ whose frequency in $\bS$ is at most $\kappa/(2N)$, and let ${\hat{\cal P}}_{\bS'}$ denote the empirical distribution over $[N]$ corresponding to $\bS'$.  Then with probability at least $1-\gamma$ over the draw of $\bS'$, $\dtv(\hat{\cal P}_{\bS'},{\cal P})$ is at most $\kappa.$
\end{corollary}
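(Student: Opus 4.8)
The plan is to derive Corollary~\ref{cor:learn-discrete} from Theorem~\ref{thm:learn-discrete} by a simple triangle-inequality argument, controlling separately the sampling error (handled by Theorem~\ref{thm:learn-discrete}) and the extra error introduced by deleting low-frequency elements. First I would apply Theorem~\ref{thm:learn-discrete} with accuracy parameter $\kappa/2$ in place of $\kappa$ (absorbing the constant-factor change into the $O(\cdot)$ in $m$, which is why the statement says ``as above'' — one should read the sample size as $m = O((N/\kappa^2)\log(1/\gamma))$, large enough for accuracy $\kappa/2$), so that with probability at least $1-\gamma$ we have $\dtv(\hat{\cal P}_{\bS},{\cal P}) \le \kappa/2$.

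Next I would bound $\dtv(\hat{\cal P}_{\bS'}, \hat{\cal P}_{\bS})$ deterministically, on the event above. Passing from $\bS$ to $\bS'$ zeroes out the empirical mass of every element $j$ whose $\bS$-frequency is at most $\kappa/(2N)$ and renormalizes. Let $T$ be the set of such ``light'' elements. The total empirical mass removed is $\sum_{j \in T} \hat{\cal P}_{\bS}(j) \le N \cdot \kappa/(2N) = \kappa/2$, since $|T| \le N$. Renormalizing a sub-probability vector of mass $1 - \eta$ (here $\eta \le \kappa/2$) to a probability vector changes it in total variation distance by exactly $\eta$ on the surviving support (each surviving coordinate scales up by a factor $1/(1-\eta)$, contributing total increase $\eta$), so $\dtv(\hat{\cal P}_{\bS'}, \hat{\cal P}_{\bS}) \le \eta \le \kappa/2$; more crudely, removing mass $\eta$ and renormalizing always moves the distribution by at most $\eta$ in $\ell_1/2$ distance. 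Combining via the triangle inequality, $\dtv(\hat{\cal P}_{\bS'}, {\cal P}) \le \dtv(\hat{\cal P}_{\bS'}, \hat{\cal P}_{\bS}) + \dtv(\hat{\cal P}_{\bS}, {\cal P}) \le \kappa/2 + \kappa/2 = \kappa$, which holds with probability at least $1-\gamma$ over the draw of $\bS$ (equivalently of $\bS'$, which is a deterministic function of $\bS$).

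I do not expect any real obstacle here; the only points requiring a little care are (i) making explicit that the sample size $m$ should be taken for target accuracy $\kappa/2$ rather than $\kappa$ (a cosmetic adjustment of constants), and (ii) verifying the elementary fact that deleting total empirical mass $\eta$ and renormalizing perturbs the distribution by at most $\eta$ in total variation distance — this is immediate from writing out the renormalization factor $1/(1-\eta)$ and noting that the $\ell_1$ change is $\eta/(1-\eta) \cdot (1-\eta) = \eta$ on the surviving coordinates plus $\eta$ lost on the deleted ones, halved by the $\dtv$ normalization, giving $\le \eta$. Since $\eta \le \kappa/2$, the bound follows. The whole argument is a few lines.
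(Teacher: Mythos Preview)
Your proposal is correct and is essentially the same argument as the paper's: apply Theorem~\ref{thm:learn-discrete} with accuracy $\kappa/2$ (absorbed into the $O(\cdot)$ in $m$), bound $\dtv(\hat{\cal P}_{\bS'},\hat{\cal P}_{\bS})$ by the total removed mass $N\cdot \kappa/(2N)=\kappa/2$, and finish by the triangle inequality. The paper's write-up is terser and simply asserts the $\kappa/2$ bound on $\dtv(\hat{\cal P}_{\bS'},\hat{\cal P}_{\bS})$; your added remarks about renormalization are a harmless elaboration of the same step.
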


\begin{proof}
By~\Cref{thm:learn-discrete}, with probability at least $1-\delta$ the hypothesis  $\hat{\cal P}_{\bS}$ from~\Cref{thm:learn-discrete} is $\kappa/2$-close to ${\cal P}$. The corollary follows since the variation distance between $\hat{\cal P}_{\bS}$ and $\hat{\cal P}_{\bS'}$ is at most $N \cdot \kappa/(2N)=\kappa/2.$\vspace{-0.35cm}
\end{proof}

\subsection{Proof of~\Cref{thm:main}}

\noindent Algorithm $A$ is given in~\Cref{fig:main}. Its proof of correctness is given below.

\medskip

\begin{figure}[t!]
	\begin{framed}
		\begin{minipage}{42em}
			\begin{flushleft}
				\noindent \textbf{Algorithm} {$A\hspace{0.04cm}(n,s,\eps,\delta_{\text{hard}},
					\delta_{\text{fail}},\calC_{q,q'}(\calD))$, with constants $q,q'\in [0,1)$
					as deletion and insertion rates.} \\
				\noindent {\bf Input:} String length $n$, support size $\smash{s\le \exp(\Theta(n^{1/3}))}$, accuracy parameter $\smash{\eps \geq \exp (-\Theta (n^{1/3}))}$, 
				fraction of hard support sets   
				$\smash{\delta_{\text{hard}} \geq \exp (-\Theta({n^{1/3}}))}$,
				failure probability ${\smash{\delta_{\text{fail}} 
						\geq \exp(-\Theta(n^{1/3}))}}$,
				and access to 
				$\smash{{\cal C}_{q,q'}({\cal D})}$ where $\calD$ is a  
				probability distribution over $s$ strings in $\{0,1\}^n$.
				
				\noindent {\bf Output:}  Either a probability distribution $\mathcal{D}'$ or ``fail.'' \vspace{0.0cm}
				\begin{enumerate}[itemsep=+0.6mm]
					\item Draw $T$ traces $\by^1,\ldots,\by^T$ from ${\cal C}_{q,q'}({\cal D})$, where 
					$$T = \frac{s}{\eps^2} \cdot \exp\left(\Theta\left(\left(\log   \max \left\{ n, \frac{2s}{\delta_{\text{hard}}} \right\}\right)^{1/3}\right)\right) \cdot \log \left( \frac{3s}{\delta_{\text{fail}}} \right).\vspace{-0.08cm}$$
					
					\item For each pair of traces $\by^i,\by^j$ with $1 \leq i < j \leq T$, run $A_{\text{cluster}}\hspace{0.04cm}(n,\by^i,\by^j)$ 
					from~\Cref{sec:clustering}.
					If the $\smash{{T \choose 2}}$-many outcomes of $A_{\text{cluster}}$ (corresponding to $\smash{{T \choose 2}}$ many answers of ``same'' or ``different'') do not correspond to a disjoint union of cliques then halt and output ``fail,'' otherwise continue.
					
					\item Let the resulting clusters\hspace{0.05cm}/\hspace{0.05cm}cliques be denoted $C_1,\dots,C_r$, so $C_1 \sqcup \cdots \sqcup C_r$ is a partition of the\\ set $\smash{\{\by^1,\dots,\by^T\}}$ of traces.\footnote{Strictly speaking, each $C_i$ is a multiset.} Call $C_i$ \emph{large} if it contains at least $T \cdot (\eps/(2s))$ many elements. \\ Let $C'_1,\dots,C'_{r'}$ denote the large clusters for some $r'\le r$, and let $C'_{\text{total}} = \sum_i |C'_i|$. 
					
					\item For each large multiset $C'_i$, 
					run $A'_{\text{average-case}}$ from~\Cref{sec:reduction}
					using $n$ and strings from $C'_i$, in which\\ $\tau$
					is set to $\smash{\delta_{\text{hard}}/{(2s)}}$ 
					and $\delta$ is set to $\smash{\delta_{\text{fail}}/(3s)}$. 
					Let $\smash{z^{i}}$ be the output of $\smash{A'_{\text{average-case}}}$ on this input.
					
					\item Distribution ${\cal D}'$ that $A$ outputs is supported on $z^{1},\dots,z^{r'}$ and puts weight ${|C'_i|}/{C'_{\text{total}}}$ on $z^{i}$. \vspace{0.1cm}
				\end{enumerate}
			\end{flushleft}
		\end{minipage}
	\end{framed}\vspace{-0.2cm}
	\caption{Description of the main algorithm $A$.} \label{fig:main}
\end{figure}

\begin{proof} Suppose that the true underlying support of $\calD$ is $\mathcal{X}\hspace{-0.03cm} =\hspace{-0.03cm} ( x^{1}, \ldots, x^{s} )$ (as an ordered list). We consider $s$ instances~of~algorithm $\smash{A'_{\text{average-case}}}$
  from~\Cref{sec:reduction}, where each instance has
  parameters $n$, $\tau =  {\delta_{\text{hard}}}/{(2s)}$ and
  $\delta = \delta_{\text{fail}}/({3s})$, 
  and the $i$-th one runs on $T^*$ many traces drawn from $\smash{\calC_{q,q'}(x^{i})}$,
  where
\[
T^*= \exp\left(\Theta\left( \left(\log\hspace{0.04cm} \max\left\{n,\frac{2s}{\delta_{\text{hard}}}\right\}\right)^{1/3}\right)\right)\cdot \log\left(\frac{3s}{\delta_{\text{fail}}}\right)
\]
as specified in \Cref{sec:reduction} (so we have $T=(s/\eps^2)\cdot T^*$).
We say that $\mathcal{X}$ is a \emph{hard support} if either
\begin{flushleft}\begin{enumerate}
\item[(a)] At least one string $x^{i}$, $i\in [s]$, is hard for algorithm $\smash{A'_{\text{average-case}}}$; or
\vspace{-0.1cm}
\item[(b)] After drawing $T$ traces from $\calC_{q,q'}(x^{(i)})$ for each $i\in [s]$,
  $A_{\text{cluster}}$ fails on one of these $\smash{{sT\choose 2}}$ many pairs of traces with probability at least $\delta_{\text{fail}}/3$.
\end{enumerate}\end{flushleft}

We consider a random support $\boldsymbol{\mathcal{X}} = ( \bx^{1},\ldots,\bx^{s} )$ drawn from $\{0,1\}^n$
  independently and uniformly. \Cref{thm:HPPplus} says the probability of a uniform random string being hard
  for $\smash{A'_{\text{average-case}}}$ is~at~most~$ {\delta_{\text{hard}}}/({2s})$. A union bound says the probability our support satisfies (a)
  is at most ${\delta_{\text{hard}}}/{2}$.
On the other~hand, for each support $\mathcal{X}$, we let
  $\lambda(\mathcal{X})$ denote the probability that
  $A_{\text{cluster}}$ fails on at least one of the $\smash{{sT\choose 2}}$ pairs. {\Cref{thm:clustering} implies $\smash{\Ex_{\boldsymbol{\mathcal{X}}}[\lambda(\boldsymbol{\mathcal{X}})] \leq
\smash{{sT\choose 2} \hspace{-0.02cm}\cdot\hspace{-0.02cm} \delta_{\text{cluster}}\le (\delta_{\text{fail}}/3)\hspace{-0.02cm}\cdot\hspace{-0.02cm} \delta_{\text{hard}}/2,}} 
$
where the last inequality follows by setting~the constant hidden
  in the $\Theta(n^{1/3})$ of upper and lower bounds for $s,\eps,\delta_{\text{hard}}$
  and $\delta_{\text{fail}}$ to be sufficiently small (compared to the constant
  hidden in $\delta_{\text{cluster}}$).
By Markov, a random support satisfies (b) with probability at most $\delta_{\text{hard}}/2$.} A union bound on (a) and (b) says the probability of a random support being hard is at most $\delta_{\text{hard}}$.

If $\mathcal{D}'$ is a probability distribution where $\dtv (\mathcal{D}, \mathcal{D}') \leq \eps$, then we say that $\mathcal{D}'$ is $\eps$-accurate. It suffices~to show that for a support $\mathcal{X}$ that is not hard and an arbitrary distribution $\calD$ on that support set, the probability that our algorithm $A$ fails to output an $\eps$-accurate distribution $\calD'$ is at most $\delta_{\text{fail}}$.

Our algorithm has three points of failure. In Step 2, it could fail to cluster the $T$ traces correctly. Given the correct clustering in Step 2, it could fail to learn the underlying string for some cluster in Step 4. Finally, given the correct support, it could fail to output an $\eps$-accurate distribution $\calD'$ in Step 5.

By the definition of hard supports we have that
  Step 2 returns an incorrect clustering with probability at most $\delta_{\text{fail}}/3$.
Given a correct clustering in Step 2, each large $C'_i$ will have at least $T \cdot ({\eps}/{2s})=T^*/\eps\ge T^*$
elements. Since no $\smash{x^{i}}$ is hard for $A'_{\text{average-case}}$, by~\Cref{thm:HPPplus} the probability any instance of $A'_{\text{average-case}}$ fails is at most $\delta_{\text{fail}}/{(3s)}$. By a union bound, the probability of a Step 4 error is at most $ \delta_{\text{fail}}/3$.

Since $\smash{T \geq \Omega(({s}/{\eps^2}) \cdot \log(3 / \delta_{\text{fail}} ))}$ and the large clusters are defined to have size at least a $\eps / 2s$ fraction of the number of traces, then by Corollary~\ref{cor:learn-discrete} with $N = s$, $\kappa = \eps$, $\gamma =  \delta_{\text{fail}}/3$, and $m = T$, given the correct support the probability that Step 5 fails to output an $\eps$-accurate probability distribution is at most $ \delta_{\text{fail}}/3$.
By a union bound, the probability of failure on a support that is not hard is at most $\delta_{\text{fail}}$. 

By~\Cref{thm:clustering} Step 2 takes time $O(nT^2)$. 
By~\Cref{thm:HPPplus} Step 4 takes time $\smash{\text{poly}(n,{s}/{\delta_{\text{hard}}}, \log ({1}/{\delta_{\text{fail}}} ))}$. Step 5 takes time $\smash{O(s)}$ to compute the weights used in $\smash{\mathcal{D}'}$. Therefore, the overall running time of the~algorithm is $\smash{\text{poly}(n, s, 1 / \eps, 1 / \delta_{\text{hard}}, \log({1}/{\delta_{\text{fail}}}))}$. The theorem follows since the sample complexity $T$ is at most $\text{poly}(s, 1 / \eps, \exp(\log^{{1}/{3}} n), \exp( \log^{{1}/{3}} (1 / \delta_{\text{hard}}) ), \log ( 1 / \delta_{\text{fail}} ) ).$
\end{proof}


\begin{flushleft}
\bibliography{allrefs}{}
\bibliographystyle{alpha}
\end{flushleft}

\appendix

\section{Deferred proof of Lemma~\ref{lem:keytechnical}} \label{sec:keytechnical}

We recall Lemma~\ref{lem:keytechnical}:

\medskip

\noindent {\bf Lemma~\ref{lem:keytechnical} (restated)} 
\emph{	Let $\rr,\rr'$ be two $n$-patterns drawn independently from $\calR_{n,q,q'}$.
	Then with probability at least $1-\exp(-\Omega(n^{1/3}))$, the
	following three properties all hold:
	\begin{flushleft}\begin{enumerate}
			\item[\emph{(0):}] $|\rr|,|\rr'|\ge \alpha n/2$.\vspace{-0.06cm}
			\item[\emph{(1):}] $|B_\ell(\rr)\cap B_\ell(\rr')|\ge \tau t$ for all $\ell\in [\tilde{s}]$.\vspace{-0.06cm}
			\item[\emph{(2):}] If an $i\in [n]$ appears in both
			$B_\ell(\rr)$ and $B_{\ell'}(\rr')$ for some $\ell,\ell'\in [\tilde{s}]$,
			then we have $\ell=\ell'$.
	\end{enumerate}\end{flushleft}
}
\medskip
We will use the following tail bounds for sums of independent geometric random variables, which are special cases of results proved by~\cite{Janson18}.
\begin{theorem}[Theorems 2.1 and 3.1 in~\cite{Janson18}] \label{thm:janson_geometric_tail}
	Let $p' \in (0,1]$, and $\bX_1,\cdots, \bX_n$ be independent \emph{Geometric}\hspace{0.04cm}$(p')$ random variables. Let $\bX = \sum_{i\in [n]} \bX_i$ and $\mu = \bE[\bX] = n/p'$. Then the following holds:
	\begin{enumerate}
		\item For any $\lambda \geq 1$, we have 
		\[
			\Pr\big[X \geq \lambda \mu\big] \leq \exp\left({-p' \mu (\lambda - 1 - \ln \lambda)}\right).
		\]
		\item For any $0 < \lambda \leq 1$, we have 
		\[
			\Pr\big[X \leq \lambda \mu\big] \leq \exp\left({-p' \mu (\lambda - 1 - \ln \lambda)}\right).
		\]
	\end{enumerate}
\end{theorem}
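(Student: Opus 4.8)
\medskip
\noindent\textbf{Proof proposal for~\Cref{thm:janson_geometric_tail}.}
The plan is to run the standard Chernoff--Cram\'er (exponential moment) argument, but to first dominate the moment generating function of a Geometric$(p')$ variable by that of an Exponential variable of the same mean; this makes the optimization over the exponential parameter completely explicit and produces exactly the rate function $\lambda-1-\ln\lambda$. First I would record, for a single $\bX_i\sim$ Geometric$(p')$ (with $\Pr[\bX_i=\ell]=(1-p')^{\ell-1}p'$, $\ell\ge 1$) and $t<-\ln(1-p')$,
\[
M(t):=\bE\big[e^{t\bX_i}\big]=\frac{p'e^t}{1-(1-p')e^t},
\]
which is in particular finite for all $t\le 0$. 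The key lemma is the pair of comparisons
\[
M(t)\le \frac{1}{1-t/p'}\ \ \text{for }0\le t<p',\qquad M(-t)\le\frac{1}{1+t/p'}\ \ \text{for }t\ge 0,
\]
i.e.\ that $M$ is dominated by the moment generating function of an Exponential variable of rate $p'$. After clearing denominators (which are positive since $t<p'\le-\ln(1-p')$), each reduces to the elementary inequality $e^{t}(1-t)\le 1$, respectively $e^{-t}(1+t)\le 1$, valid for all $t\ge 0$ because $1+x\le e^x$. By independence this upgrades to $\bE[e^{t\bX}]=M(t)^n\le(1-t/p')^{-n}$ and $\bE[e^{-t\bX}]=M(-t)^n\le(1+t/p')^{-n}$, so $\bX$'s moment generating function is dominated by that of a $\Gamma(n,p')$ random variable.

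Given this, both tails follow from Markov's inequality with an explicit choice of parameter. For the upper tail ($\lambda\ge 1$) I would take $t=p'(\lambda-1)/\lambda\in[0,p')$ and compute, using $\mu=n/p'$ so that $t\lambda\mu=(\lambda-1)n$ and $1-t/p'=1/\lambda$,
\[
\Pr[\bX\ge\lambda\mu]\le e^{-t\lambda\mu}\,\bE[e^{t\bX}]\le e^{-(\lambda-1)n}\lambda^{n}=\exp\big(-n(\lambda-1-\ln\lambda)\big).
\]
For the lower tail ($0<\lambda\le 1$) I would take $t=p'(1-\lambda)/\lambda\ge 0$, so that $t\lambda\mu=(1-\lambda)n$ and $1+t/p'=1/\lambda$, and get
\[
\Pr[\bX\le\lambda\mu]\le e^{t\lambda\mu}\,\bE[e^{-t\bX}]\le e^{(1-\lambda)n}\lambda^{n}=\exp\big(-n(\lambda-1-\ln\lambda)\big).
\]
Since $p'\mu=n$, the exponent equals $-p'\mu(\lambda-1-\ln\lambda)$, which is the claimed bound; the boundary case $\lambda=1$ and the degenerate case $p'=1$ (where $\bX=n$ deterministically) are immediate.

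I do not expect a genuine obstacle. The only place needing care is the geometric-to-exponential moment-generating-function comparison: checking that the relevant denominators are positive, that the chosen $t$ lies in the convergence window $t<-\ln(1-p')$ (which follows from $t<p'$ together with the standard inequality $p'\le-\ln(1-p')$), and separating the degenerate $p'=1$ case. An alternative route would optimize the exact geometric cumulant generating function over $t$ directly, yielding an even sharper exponent, but then one must verify by a short one-variable calculus comparison that it still dominates $\lambda-1-\ln\lambda$, so the Exponential/Gamma-domination argument above is preferable for brevity.
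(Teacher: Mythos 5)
This theorem is not proved in the paper at all: it is imported verbatim as a black-box citation of Theorems 2.1 and 3.1 of Janson's paper on tail bounds for sums of geometric and exponential variables, so there is no in-paper proof to compare against. Your self-contained derivation is correct. The moment generating function $M(t)=p'e^t/(1-(1-p')e^t)$ is right, the two domination inequalities reduce (after cross-multiplying by the positive denominators, which you correctly justify via $t<p'\le-\ln(1-p')$) to $e^{t}(1-t)\le 1$ and $e^{-t}(1+t)\le 1$, and the parameter choices $t=p'|\lambda-1|/\lambda$ give $t\lambda\mu=|\lambda-1|n$ and $(1\mp t/p')^{-n}=\lambda^n$, yielding exactly $\exp(-p'\mu(\lambda-1-\ln\lambda))$ since $p'\mu=n$. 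Your route differs slightly from Janson's original, which optimizes the exact geometric cumulant generating function and obtains a marginally sharper constant-free bound of the same form; your Gamma-domination shortcut trades that sharpness (which is not needed anywhere in this paper, since the bound is only ever used up to constants in the exponent) for a cleaner closed-form optimization. The edge cases ($\lambda=1$, $p'=1$) are handled.
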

Note that $\lambda -1 - \ln \lambda \geq 0$ for all $\lambda > 0$, with equality only at $\lambda = 1$. We first derive a simpler expression for the tail bounds, using the following claim:
\begin{claim} \label{clm:Janson_simple}
	Let $f : (-1,\infty) \rightarrow \R$ be defined as
	$
	f(x) = x - \ln (1+x).
	$
	The following  properties hold:
(i) $f(0) = 0$; (ii) $f(x) > x^2/4$ for all $x \in (-1,1]  \setminus \{0\}$; and
(iii) $f(x) \geq x/4$ for all $x \geq 1$.
\end{claim}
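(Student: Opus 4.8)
The plan is to establish each of the three parts by elementary single-variable calculus, taking them in order. Part (i) is immediate, since $f(0) = 0 - \ln(1+0) = 0$.

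For part (ii), I would introduce the auxiliary function $g(x) := f(x) - x^2/4 = x - \ln(1+x) - x^2/4$ on the interval $(-1,1]$ and show that it attains a strict global minimum at $x=0$. One has $g(0)=0$, and differentiating yields
\[
g'(x) \;=\; 1 - \frac{1}{1+x} - \frac{x}{2} \;=\; \frac{x}{1+x} - \frac{x}{2} \;=\; \frac{x(1-x)}{2(1+x)} .
\]
Since $1+x>0$ throughout, the sign of $g'(x)$ agrees with that of $x(1-x)$: thus $g'<0$ on $(-1,0)$ and $g'>0$ on $(0,1)$, so $g$ is strictly decreasing on $(-1,0]$ and strictly increasing on $[0,1]$. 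Consequently $g(x) > g(0) = 0$ for every $x \in (-1,1]\setminus\{0\}$, which is exactly the claimed inequality $f(x) > x^2/4$.

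For part (iii), the analogous step is to consider $h(x) := f(x) - x/4 = \tfrac34 x - \ln(1+x)$ on $[1,\infty)$. Its derivative $h'(x) = \tfrac34 - \tfrac{1}{1+x}$ is strictly positive for all $x \ge 1$, because $\tfrac{1}{1+x} \le \tfrac12 < \tfrac34$; hence $h$ is increasing on $[1,\infty)$ and it suffices to bound the left endpoint. Here $h(1) = \tfrac34 - \ln 2 > 0$, using the numerical fact $\ln 2 < \tfrac34$ (equivalently $e^{3/4} > 2$, which follows from the crude Taylor bound $e^{3/4} > 1 + \tfrac34 + \tfrac12(\tfrac34)^2 = \tfrac{65}{32} > 2$). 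Therefore $h(x) \ge h(1) > 0$ for all $x \ge 1$, i.e.\ $f(x) \ge x/4$, completing the plan. The argument involves no genuine obstacle; the only points meriting a little care are preserving the \emph{strict} inequality at the right endpoint $x=1$ in part (ii) and verifying $\ln 2 < \tfrac34$ in part (iii), both of which are dispatched above.
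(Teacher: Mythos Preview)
Your proof is correct and follows essentially the same approach as the paper: introduce the auxiliary functions $g(x)=f(x)-x^2/4$ and $h(x)=f(x)-x/4$ and analyze them by elementary calculus. The only minor difference is in part~(ii): the paper asserts that $g$ is strictly convex on $(-1,1]$ (which is in fact not quite true, since $g''(x)=\tfrac{1}{(1+x)^2}-\tfrac12$ changes sign at $x=\sqrt{2}-1$), whereas your first-derivative sign analysis $g'(x)=\tfrac{x(1-x)}{2(1+x)}$ cleanly shows $g$ is strictly decreasing on $(-1,0]$ and strictly increasing on $[0,1]$, which is both correct and sufficient.
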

\begin{proof}
	The claim follows from elementary calculus. For item (ii) it can be shown that  $g(x) = f(x) - x^2/4$ attains its minimum value $0$ at $x = 0$ and is strictly convex in $(-1,1]$. For item (iii) it is easy to verify that $h(x) = f(x) - x/4$ satisfies $h'(x) > 0$ for all $x \geq 1$, and hence its minimum value is $h(1) \geq 0.05$.
\end{proof}
Letting $x = \lambda - 1$ in~\Cref{thm:janson_geometric_tail}, this claim allows us to replace the $\lambda - 1 - \ln \lambda$ term in the exponent of the tail bounds by either $(\lambda-1)^2 / 4$ or $(\lambda-1)/4$, depending on whether 
$\lambda < 2$ or $\lambda \geq 2$.

Now, we state and prove a few claims that will be useful for proving Lemma   \ref{lem:keytechnical}.
The first claim states that property (0) in Lemma \ref{lem:keytechnical} 
  holds with probability at least $1-\exp(-\Omega(n))$.

\begin{claim} \label{clm_insertion}
With probability at least $1 - \exp(-\Omega(n))$, 
$\br\sim \calR_{n,q,q'}$ satisfies that $|\rr^{(1)}| \geq  \alpha n/2$.
\end{claim}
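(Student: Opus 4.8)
The plan is to recognize that $|\rr^{(1)}|$ is \emph{exactly} a sum of $n$ i.i.d.\ geometric random variables, so that the claim reduces immediately to the lower-tail bound of~\Cref{thm:janson_geometric_tail}. First I would recall from the definition of $\calR_{n,q,q'}$ that $\rr^{(1)}$ is obtained from $(1,2,\dots,n)$ by inserting $\bG_j(q')-1$ stars before the $j$-th entry for each $j\in[n]$, where $\bG_1(q'),\dots,\bG_n(q')$ are i.i.d.\ \emph{Geometric}$(p')$ with $p'=1-q'$. Hence $|\rr^{(1)}|=\sum_{j\in[n]}\bG_j(q')$ is a sum of $n$ independent \emph{Geometric}$(p')$ variables with mean $\mu=n/p'$, and in particular $p'\mu=n$.

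Next I would apply the second part of~\Cref{thm:janson_geometric_tail} with $\lambda=p/2$, where $p=1-q$. Since $q\in[0,1)$ we have $p\in(0,1]$, so $\lambda=p/2\in(0,1/2]\subseteq(0,1]$, and the hypothesis $0<\lambda\le 1$ is met; moreover $\lambda\mu=(p/(2p'))\hspace{0.03cm}n=\alpha n/2$, using $\alpha=p/p'$. The theorem then gives
\[
\Pr\big[\,|\rr^{(1)}|\le \alpha n/2\,\big]=\Pr\big[\,|\rr^{(1)}|\le\lambda\mu\,\big]\le \exp\!\big(-p'\mu\,(\lambda-1-\ln\lambda)\big)=\exp\!\big(-n\,(\lambda-1-\ln\lambda)\big).
\]

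Finally I would lower-bound the exponent by a positive constant: setting $x=\lambda-1=p/2-1\in(-1,-1/2]$, we have $\lambda-1-\ln\lambda=f(x)$ in the notation of~\Cref{clm:Janson_simple}, and item~(ii) of that claim gives $f(x)>x^2/4\ge 1/16$. Therefore $\Pr[\,|\rr^{(1)}|<\alpha n/2\,]\le\exp(-n/16)=\exp(-\Omega(n))$, where the implicit constant depends only on $q$ (through $p$). There is essentially no obstacle here; the only point to verify is that the threshold $\alpha n/2$ really lies strictly below the mean $\mu$, which is guaranteed because $p\le 1$ forces $\lambda=p/2<1$, so the lower-tail bound applies with a nontrivial constant.
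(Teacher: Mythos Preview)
Your argument is correct for the statement exactly as written: you correctly identify $|\rr^{(1)}|=\sum_{j\in[n]}\bG_j(q')$ as a sum of $n$ i.i.d.\ \emph{Geometric}$(p')$ variables and apply the lower-tail part of~\Cref{thm:janson_geometric_tail} with $\lambda=p/2$, so that $\lambda\mu=\alpha n/2$, then bound the exponent via~\Cref{clm:Janson_simple}(ii). All of this is sound.

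However, the claim as stated almost certainly contains a typo: the quantity that is actually needed (and that the paper's proof actually establishes) is $|\rr|\ge\alpha n/2$, i.e.\ the length \emph{after deletions}. This is what property~(0) of Lemma~\ref{lem:keytechnical} requires, and the surrounding discussion explicitly says property~(0) follows from ``tail bounds on sums of independent Geometric random variables and from standard Chernoff bounds \dots\ for insertions and deletions, respectively.'' The paper's proof accordingly has two steps: first it applies~\Cref{thm:janson_geometric_tail} with $\lambda=3/4$ to get $|\rr^{(1)}|\ge 3n/(4p')$ with probability $1-\exp(-\Omega(n))$; then, conditioning on any such realization, it applies a standard Chernoff bound to the deletion step to conclude $|\rr|\ge\alpha n/2$ (noting that $\alpha n/2\le(2p/3)\,|r^{(1)}|$ when $|r^{(1)}|\ge 3n/(4p')$).

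Your proof handles only the insertion step, with a different choice of $\lambda$ tailored to hit the threshold $\alpha n/2$ directly for $|\rr^{(1)}|$. That is fine as far as it goes, but it does not address the deletion step, so it does not yield the intended conclusion $|\rr|\ge\alpha n/2$. To match what the paper actually needs, you should add the second step: fix any $r^{(1)}$ with length at least (say) $3n/(4p')$ and apply the multiplicative Chernoff bound to the Bernoulli$(p)$ survival indicators.
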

\begin{proof}
Let $\rr^{(1)}$ be the random string defined earlier in the generation of
	  $\smash{\br\sim\calR_{n,q,q'}}$. 
	As $|\rr^{(1)}|$ is a sum of $n$ independent Geometric\hspace{0.04cm}$(p')$ random variables, we have $\smash{\mu = \bE [\hspace{0.04cm} |\rr^{(1)}|\hspace{0.04cm}] = {n}/{p'}}$. Invoking~\Cref{thm:janson_geometric_tail} with $\lambda = 3/4$ and Part (1) of Claim \ref{clm:Janson_simple} with $x = \lambda-1$,  the probability of 
	$|\rr^{(1)}|<3n/(4p')$ is $\exp(-\Omega(n))$.
	
Fixing any realization $r^{(1)}$ of $\br^{(1)}$ with $|r^{(1)}|\ge 3n/(4p')$,
  it follows from the standard Chernoff bound that the probability of 
  $\smash{|\br|<\alpha n/2\le (2p)/3\cdot |r^{(1)}|}$ is at most $\exp(-\Omega(n))$. 
This finishes the proof.
\end{proof}



Fix an $i \in [n]$. Let $\bY^*_i$ be the random variable denoting the number of characters before $i$ in $\rr^{(1)}$ (after insertions). Recall that $\bY_i$ denotes the number of characters before $i$ in $\rr^{(1)}$ that survive deletions; note that $\bY_i$ is well-defined even if $i$ is deleted. Then $\bE[\bY^*_i] = (i/p') - 1$, and $\bE[\bY_i] = p \cdot\bE[\bY^*_i] = ((i/p')-1)p$.

\begin{lemma} \label{lem_coordinate_deviation_t}
For any $i\in [n]$, the probability that $\big|\bY_i-\bE[\bY_i]\big|\ge 0.05\hspace{0.02cm}t$
  is at most 
$\exp (-\Omega (n^{1/3} ))$.
\end{lemma}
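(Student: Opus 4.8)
The plan is to split the randomness defining $\bY_i$ into two independent stages — first the number $\bY^*_i$ of characters inserted before position $i$ during the generation of $\rr^{(1)}$, and then the independent deletions applied to those $\bY^*_i$ characters — and control each stage separately: the first via the geometric tail bound of \Cref{thm:janson_geometric_tail} (simplified using \Cref{clm:Janson_simple}), and the second via a standard Chernoff bound for binomial random variables. The numerical fact that makes everything work out is that the allowed deviation $0.05t = \Theta(n^{2/3})$ and the relevant ``size'' scale $\Theta(n)$ satisfy $t^2/n = \Theta(n^{1/3})$, which is exactly the exponent the lemma asks for.

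First I would set $\bX := \sum_{j=1}^{i}\bG_j(q') = \bY^*_i + 1$, which is a sum of $i$ i.i.d.\ Geometric$(p')$ variables with mean $\mu := i/p' \le n/p'$; since $\bE[\bY^*_i] = \mu - 1$, we have $|\bY^*_i - \bE[\bY^*_i]| = |\bX - \mu|$. Applying \Cref{thm:janson_geometric_tail} with $\lambda = 1 \pm (0.02t/\mu)$ and using \Cref{clm:Janson_simple} with $x = \lambda - 1$ to lower-bound $\lambda - 1 - \ln\lambda$ by $(0.02t/\mu)^2/4$ (when $|0.02t/\mu| < 1$) or by $(0.02t/\mu)/4$ (when $0.02t \ge \mu$, i.e.\ $\lambda \ge 2$ in the upper tail; the lower tail is vacuous in that range), the exponent in the tail bound becomes $-\Omega\big(p'(0.02t)^2/\mu\big) = -\Omega(t^2/n) = -\Omega(n^{1/3})$ in the first regime and $-\Omega(t) = -\Omega(n^{2/3})$ in the second. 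Hence $\Pr[\,|\bY^*_i - \bE[\bY^*_i]| \ge 0.02t\,] \le \exp(-\Omega(n^{1/3}))$; denote by $E_1$ the complementary event.

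Next, conditioned on $\bY^*_i = k$, the variable $\bY_i$ is distributed as $\mathrm{Binomial}(k,p)$ with conditional mean $kp$. On $E_1$ we have $k \le n/p' + 0.02t \le 2n/p'$ for $n$ large, so $kp \le 2\alpha n$; a standard Chernoff bound then gives, for $0.02t \le kp$, $\Pr[\,|\bY_i - kp| \ge 0.02t \mid \bY^*_i = k\,] \le \exp\big(-\Omega((0.02t)^2/(kp))\big) = \exp(-\Omega(t^2/n)) = \exp(-\Omega(n^{1/3}))$, and an even stronger $\exp(-\Omega(t)) = \exp(-\Omega(n^{2/3}))$ bound in the complementary $0.02t > kp$ regime (where the downward deviation is impossible and the upward one is a multiplicative-$(1+\delta)$ deviation with $\delta \ge 1$). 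Combining via a union bound with $E_1$, and using the triangle inequality $|\bY_i - \bE[\bY_i]| = |\bY_i - p\,\bE[\bY^*_i]| \le |\bY_i - p\bY^*_i| + p\,|\bY^*_i - \bE[\bY^*_i]| \le 0.02t + 0.02t = 0.04t < 0.05t$ on the intersection of the two good events, we conclude $\Pr[\,|\bY_i - \bE[\bY_i]| \ge 0.05t\,] \le \exp(-\Omega(n^{1/3}))$.

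The only mildly delicate point — the ``main obstacle,'' such as it is — is the bookkeeping for small $i$ (equivalently small $\mu$), and within that, realizations $k$ for which the conditional mean $kp$ is itself smaller than the target deviation $0.05t$: there the additive/Gaussian-style form of the Chernoff and geometric tail bounds degrades, and one must instead use the multiplicative forms, noting that a ratio $\lambda \ge 2$ or a multiplicative deviation parameter $\delta \ge 1$ yields an exponent linear in $t$, which is far more than enough. Everything else is routine substitution of $t = n^{2/3}$ into the two tail inequalities.
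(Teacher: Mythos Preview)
Your proposal is correct and follows essentially the same approach as the paper's proof: the same two-stage decomposition via the triangle inequality $|\bY_i - \bE[\bY_i]| \le |\bY_i - p\bY^*_i| + p\,|\bY^*_i - \bE[\bY^*_i]|$, with the insertion stage handled by \Cref{thm:janson_geometric_tail} together with \Cref{clm:Janson_simple} (including the separate treatment of $\lambda \ge 2$) and the deletion stage by a standard binomial tail bound conditioned on $\bY^*_i$. The only differences are cosmetic---you split the error budget as $0.02t + 0.02t$ whereas the paper uses $\eps t/2 + p\cdot \eps t/(2p)$ with $\eps = 0.05$, and the paper names the second bound ``Hoeffding'' rather than ``Chernoff''---and your explicit case analysis for small $\mu$ (respectively small $kp$) is slightly more detailed than the paper's.
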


\begin{proof}
Let $\eps=0.05$ in the proof. We have 
$$
\big|\bY_i-\bE[\bY_i]\big|
\le \big|\bY_i- p \bY_i^*\big| + \big| p \bY_i^*-p \bE[\bY_i^*]\big|
=\big|\bY_i-p \bY_i^*\big| +p \cdot \big|\bY_i^*-\bE[\bY_i^*]\big|.
$$
We first show that $|\bY_i^*-\bE[\bY_i^*]|\le  \eps t/(2p)$ with probability 
  at least $1-\exp(-\Omega(n^{1/3}))$.
Next  conditioning on any fixed realization $r^{(1)}$ of $\rr^{(1)}$ with
  $|Y_i^*-\bE[\bY_i^*]|\le \eps t/(2p)$ (in particular this implies that 
  $Y_i^*=O(n)$) we show that $|\bY_i-pY_i^*|\le \eps t/2$ with probability at least
  $1-\exp(-\Omega(n^{1/3}))$.
The lemma then follows by combining these two steps.
Given that the second step follows from the Hoeffding bound 
  (with $Y_i^*=O(n)$ and $t=n^{2/3}$), we focus on the first part in the rest of the proof. 


	First we analyze the lower tail, i.e., the probability of 
	$\bY_i^*-\bE[\bY_i^*]\le -\eps t/(2p)$.
Because $\bY^*_i \geq 0$~we may assume $\bE[\bY^*_i] > \eps t / (2p)$ (otherwise $\bY^*_i \geq \bE[\bY^*_i] - \eps t / (2p)$ trivially).
	Let $$\lambda = 1 - \frac{\eps t}{2p \bE[\bY^*_i]}\quad \text{and}\quad x = \lambda - 1 = - \frac{\eps t}{2p \bE[\bY^*_i]},$$ so that $\lambda \bE[\bY^*_i] = \bE[\bY^*_i] - {\eps t}/({2p})$. By~\Cref{thm:janson_geometric_tail} and Part (1) of Claim \ref{clm:Janson_simple}, we have
	\[
		\Pr\left[\bY^*_i\le \bE[\bY^*_i] - \frac{\eps t}{2p}\right] \leq \exp\left(-\Omega\left(\bE[\bY^*_i]\cdot  \frac{t^2}{ \bE[\bY^*_i]^2}\right)\right) \leq \exp\left(-\Omega\left(\frac{t^2}{n}\right)\right) = \exp\left(-\Omega(n^{1/3})\right).
	\]
	For the second inequality, we used the fact that $\bE[\bY^*_i] = O(n)$. Similarly, we analyze the upper tail. Let $$\lambda = 1 + \frac{\eps t}{2p \bE[\bY^*_i]}\quad \text{and}\quad x = \lambda - 1 = \frac{\eps t}{2p \bE[\bY^*_i]}.$$ If $\lambda \leq 2$,~\Cref{thm:janson_geometric_tail} and Part (1) of Claim \ref{clm:Janson_simple} imply that
	\[
	\Pr\left[\bY^*_i \ge \bE[\bY^*_i] + \frac{\eps t}{2p}\right] \leq \exp\left(-\Omega\left(\bE[\bY^*_i]\cdot  \frac{t^2}{ \bE[\bY^*_i]^2}\right)\right) \leq \exp\left(-\Omega\left(\frac{t^2}{n}\right)\right) = \exp\left(-\Omega(n^{1/3})\right).
	\]
	On the other hand, if $\lambda \geq 2$, then $x \geq 1$. By~\Cref{thm:janson_geometric_tail} and Part (2) of Claim \ref{clm:Janson_simple}, we have
	\[
	\Pr\left[\bY^*_i\ge \bE[\bY^*_i] + \frac{\eps t}{2p}\right] \leq \exp\left(-\Omega\left(\bE[\bY^*_i]\cdot  \frac{t}{\bE[\bY^*_i]}\right)\right) \leq \exp\left(-\Omega(t)\right) = \exp\left(-\Omega(n^{2/3})\right).
	\]
This finishes the proof of the lemma.
	\vspace{-0.1cm}
\end{proof}

We are ready to prove Lemma \ref{lem:keytechnical}.\medskip

\begin{proof}[Proof of Lemma \ref{lem:keytechnical}] 
We work on the three events separately and apply a union bound at the end.
	\begin{flushleft}\begin{enumerate}
			\item[\emph{(0):}] It follows from Claim \ref{clm_insertion}
			that property (0) holds with probability at least $1-\exp(-\Omega(n))$.
			
			\item[\emph{(1):}] 	Fix $\ell \in [\tilde{s}]$. Recall that $I_\ell=[(2\ell-2)t+1, (2\ell-1)t] \cap \Z$. Let 
			\[
				I_\ell^{(0)} := \left[\frac{p'}{p}(2\ell-1.9)t+1, \frac{p'}{p}(2\ell-1.1)t \right] \cap \Z.
			\]
			Then $I_\ell^{(0)} \subset [n]$. We will show that with probability at least $1 - \exp(-\Omega(n^{1/3}))$, 
			 both properties below hold:
			\begin{enumerate}
				\item At least $0.7t p p'$ elements in $I_\ell^{(0)}$ survive in both 
				  $\rr$ and $\rr'$;\vspace{-0.05cm}
				\item If an element $i \in I_\ell^{(0)}$ survives in both $\rr$ and $\rr'$, then $i \in B_\ell(\rr) \cap B_\ell(\rr')$.
			\end{enumerate}
Given that (a) and (b) together imply property (1), we have that 
  property (1) holds for $\ell\in [\tilde{s}]$ with probability at least $1-\exp(-\Omega(n^{1/3}))$.
A union bound over all $\ell \in [\tilde{s}]$ implies that property (1) holds for all $\ell\in [\tilde{s}]$ with probability at least $1 - \tilde{s} \cdot  \exp(-\Omega(n^{1/3})) = 1 - \exp(-\Omega(n^{1/3})).$ 
			

So it suffices to show that (a) and (b) happen with probability
  at least $1-\exp(-\Omega(n^{1/3}))$.
  For (a), it follows from a standard Chernoff bound that (a) holds 
  with probability at least $\smash{1-\exp(-\Omega(n^{2/3}))}$.			
			For (b), 
			let $i_0$ and $i_1$ be the left and right endpoints of $\smash{I_\ell^{(0)}}$, respectively. 
			Let $\bY_{i_0}, \bY_{i_1}$ ($\bY_{i,0}',\bY_{i_1}'$) be as defined earlier with respect to $\rr$ ($\rr'$). Note that $\bE[\bY_{i_0}] = ((i_0 / p') - 1) p$ 
			and $\bE[\bY_{i_1}] = ((i_1/p')-1)p$.  
			Then by Lemma~\ref{lem_coordinate_deviation_t} (and a union bound),  
			with probability at least $1 - 4\exp(-\Omega(n^{1/3}))$, we have:
			\[\bY_{i_0} \geq \bE[\bY_{i_0}] - 0.05 t > (2 \ell - 2)t\quad\text{and}\quad
\bY_{i_1} \leq \bE[\bY_{i_1}] + 0.05 t < (2 \ell - 1)t,\]
and the same holds for $\bY_{i_0}'$ and $\bY_{i_1}'$.
			When all these events occur, then clearly all characters in $I_\ell^{(0)}$ that survive in $\rr, \rr'$ are in $I_\ell$ in both $n$-patterns. 
	This finishes the analysis of property (1).  			
			
			\item[\emph{(2):}] Suppose a character $i \in [n]$ appears in $B_\ell(\rr)$ and $B_{\ell'}(\rr')$ for some $\ell \neq \ell'$. Let $\bY_i, \bY'_i$ denote the number of characters before $i$ in $\rr, \rr'$ respectively. Then $\bE[\bY_i] = \bE[\bY'_i]$. As any two distinct blocks are separated by at least $t$ positions in the $n$-patterns, we have $|\bY_i - \bY'_i| \geq t$. Triangle inequality implies that $\smash{|\bY_i - \bE[\bY_i]| \geq t/2}$ or $\smash{|\bY'_i - \bE[\bY'_i]| \geq t/2}$. Assume without loss of generality that $|\bY_i - \bE[\bY_i]| \geq t/2>0.05\hspace{0.02cm}t$. Instantiating Lemma \ref{lem_coordinate_deviation_t}, we conclude that this event happens with probability at most $n\cdot \exp(-\Omega(n^{1/3}))$ which remains $\exp(-\Omega(n^{1/3}))$.
	\end{enumerate}\end{flushleft}
The lemma follows from a union bound.
\end{proof}

\end{document}